\newcounter{casecounter}
\newcounter{subcasecounter}
\newcounter{subsubcasecounter}
\newcommand{\ccase}[2][]{%
	\stepcounter{casecounter}%
	\setcounter{subcasecounter}{0}%
	\protected@write \@auxout {}{\string \newlabel {#2}{{#1\thecasecounter}{\thepage}{#1\thecasecounter}{#2}{}} }%
	\hypertarget{#2}{\noindent\textbf{Case #1\thecasecounter.}}
}
\newcommand{\subcase}[2][]{%
	\stepcounter{subcasecounter}%
	\setcounter{subsubcasecounter}{0}%
	\protected@write \@auxout {}{\string \newlabel {#2}{{#1\thecasecounter.\thesubcasecounter}{\thepage}{#1\thecasecounter.\thesubcasecounter}{#2}{}} }%
	\hypertarget{#2}{\noindent\textbf{Case #1\thecasecounter.\thesubcasecounter.}}
}
\newcommand{\subsubcase}[2][]{%
	\stepcounter{subsubcasecounter}%
	\protected@write \@auxout {}{\string \newlabel {#2}{{#1\thecasecounter.\thesubcasecounter.\thesubsubcasecounter}{\thepage}{#1\thecasecounter.\thesubcasecounter.\thesubsubcasecounter}{#2}{}} }%
	\hypertarget{#2}{\noindent\textbf{Case #1\thecasecounter.\thesubcasecounter.\thesubsubcasecounter.}}
}
\newcommand{\X}{\raisebox{-2pt}{\includegraphics[scale=0.45]{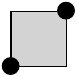}}}
\newcommand{\D}{\raisebox{-2pt}{\includegraphics[scale=0.45]{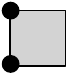}}\hspace{1pt}}
\renewcommand{\L}{\raisebox{-2pt}{\includegraphics[scale=0.45]{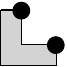}}}
\newcommand{\C}{\raisebox{-2pt}{\includegraphics[scale=0.45]{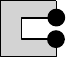}}}
\newcommand{\x}{\raisebox{-2pt}{\includegraphics[scale=0.25]{figures/X.pdf}}}
\renewcommand{\d}{\raisebox{-2pt}{\includegraphics[scale=0.25]{figures/D.pdf}}}
\newcommand{\RN}{\textsf{NoBend-Alg}\xspace}
\newcommand{\RNN}{\textsf{MinBendCubic-Alg}\xspace}
\newcommand{\R}{\textsf{Rset-Alg}\xspace}
\newcommand{\myparagraph}[1]{\smallskip\noindent\textbf{\boldmath #1}}
\newcommand{\skel}{\mathrm{skel}\xspace}
\newcommand{\rect}{\overline}
\begin{document}
	\title{Bend-minimum Orthogonal Drawings in Quadratic Time\thanks{Research supported in part by the project: ``Algoritmi e sistemi di analisi visuale di reti complesse e di grandi dimensioni - Ricerca di Base 2018, Dipartimento di Ingegneria dell'Universit\`a degli Studi di Perugia'' and by MIUR project ``MODE -- MOrphing graph Drawings Efficiently'', prot.~20157EFM5C\_001.}}
	\author{Walter Didimo\inst{1}$^{\textrm{\Letter}}$,
		Giuseppe Liotta\inst{1},
		Maurizio Patrignani\inst{2}
	}
	
	\date{}
	
	\institute{
		Universit\`a degli Studi di Perugia, Italy\\
		\email {\{walter.didimo,giuseppe.liotta\}@unipg.it}
		\and
		Universit\`a Roma Tre, Italy\\
		\email {maurizio.patrignani@uniroma3.it}
	}

	\maketitle
	

\begin{abstract}
Let $G$ be a planar $3$-graph (i.e., a planar graph with vertex degree at most three) with $n$ vertices. We present the first $O(n^2)$-time algorithm that computes a planar orthogonal drawing of $G$ with the minimum number of bends in the variable embedding setting. If either a distinguished edge or a distinguished vertex of $G$ is constrained to be on the external face, a bend-minimum orthogonal drawing of $G$ that respects this constraint can be computed in $O(n)$ time. Different from previous approaches, our algorithm does not use minimum cost flow models and computes drawings where every edge has at most two bends.
%
%
\end{abstract}

\section{Introduction}\label{se:intro}


 A pioneering paper by Storer~\cite{DBLP:conf/stoc/Storer80} asks whether a crossing-free orthogonal drawing with the minimum number of bends can be computed in polynomial time. The question posed by Storer is in the fixed embedding setting, i.e., the input is a plane 4-graph (an embedded planar graph with vertex degree at most four) and the wanted output is an embedding-preserving orthogonal drawing with the minimum number of bends. Tamassia~\cite{DBLP:journals/siamcomp/Tamassia87} answers Storer's question in the affirmative by describing an $O(n^2 \log n)$-time algorithm. The key idea of Tamassia's result is the equivalence between the bend minimization problem and the problem of computing a min-cost flow  on a suitable network.  To date, the most efficient known solution of the bend-minimization problem for orthogonal drawings in the fixed embedding setting is due to Cornelsen and  Karrenbauer~\cite{DBLP:journals/jgaa/CornelsenK12}, who show a novel technique to compute a min-cost flow on an uncapacitated network and apply this technique to Tamassia's model achieving $O(n^{\frac{3}{2}})$-time complexity.


\begin{figure}[t]
	\centering
	\hfill
	\subfigure[]{\label{fi:intro-a}\includegraphics[width=0.18\columnwidth]{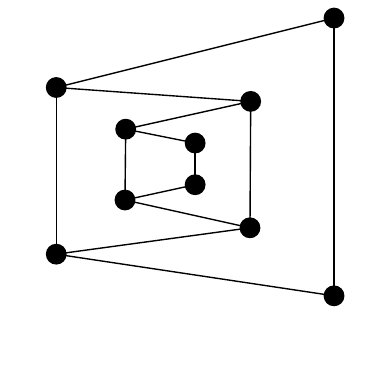}}
	\hfill
	\subfigure[]{\label{fi:intro-b}\includegraphics[width=0.18\columnwidth]{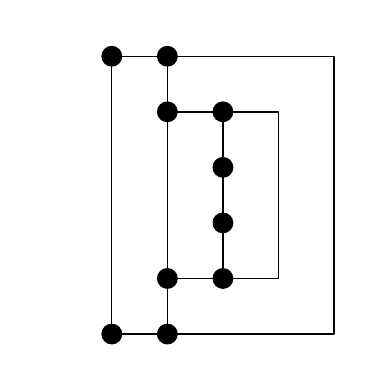}}
	\hfill
	\subfigure[]{\label{fi:intro-c}\includegraphics[width=0.18\columnwidth]{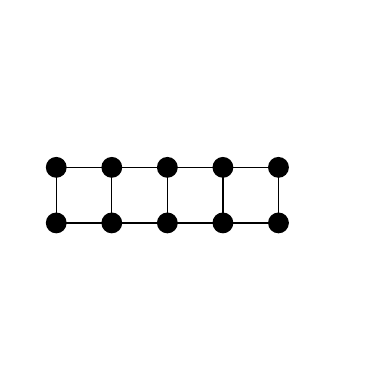}}\hfill~
	\caption{(a) A planar embedded 3-graph $G$. (b) An embedding-preserving bend-minimum orthogonal drawing of $G$. (c) A bend-minimum orthogonal drawing of $G$.}\label{fi:intro}
\end{figure}


A different level of complexity for the bend minimization problem is encountered in the variable embedding setting, that is when the algorithm is asked to find a bend-minimum solution over all planar embeddings of the graph. For example, the orthogonal drawing of Fig.~\ref{fi:intro-c} has a different planar embedding than the graph of Fig.~\ref{fi:intro-a} and it has no bends, while the drawing of Fig.~\ref{fi:intro-b} preserves the embedding but it is suboptimal in terms of bends.

Garg and Tamassia~\cite{DBLP:journals/siamcomp/GargT01} prove that the bend-minimization problem for orthogonal drawings is NP-complete for planar 4-graphs, while Di Battista et al.~\cite{DBLP:journals/siamcomp/BattistaLV98} show that it can be solved in $O(n^5 \log n)$ time for planar 3-graphs. Generalizations of the problem in the variable embedding setting where edges have some flexibility (i.e., they can bend a few times without cost for the optimization function) have also been the subject of recent studies by Bl{\"{a}}sius et al.~\cite{DBLP:journals/talg/BlasiusRW16}.

Improving the $O(n^5 \log n)$ time complexity of the algorithm by Di Battista et al.~\cite{DBLP:journals/siamcomp/BattistaLV98} has been an elusive open problem for more than a decade (see, e.g.,~\cite{DBLP:conf/gd/BrandenburgEGKLM03}), until a paper by Chang and Yen~\cite{DBLP:conf/compgeom/ChangY17} has shown how to compute a bend-minimum orthogonal drawing of a planar 3-graph in the variable embedding setting in $\tilde{O}(n^\frac{17}{7})$ time, which can be read as $O(n^\frac{17}{7}\log^k n)$ time for a positive constant $k$.

Similar to~\cite{DBLP:journals/siamcomp/BattistaLV98}, the approach in~\cite{DBLP:conf/compgeom/ChangY17} uses an SPQR-tree to explore all planar embeddings of a planar 3-graph and combines partial solutions associated with the nodes of this tree to compute a bend-minimum drawing. Both in~\cite{DBLP:journals/siamcomp/BattistaLV98} and in~\cite{DBLP:conf/compgeom/ChangY17}, the computationally most expensive task is computing min-cost flows on suitable variants of Tamassia's network. However, Chang and Yen elegantly prove that a simplified flow network where all edges have unit capacity can be adopted to execute this task. This, combined with a recent result~\cite{DBLP:conf/focs/CohenMTV17} about min-cost flows on unit-capacity networks, yields the improved time complexity.


\myparagraph{Contribution and outline.} This paper provides new algorithms to compute bend-minimum orthogonal drawings of planar 3-graphs, which improve the time complexity of the state-of-the-art solution. We prove the following.

\begin{restatable}{theorem}{main}\label{th:main}
Let $G$ be an $n$-vertex planar 3-graph. A bend-minimum orthogonal drawing of $G$ can be computed in $O(n^2)$ time. If either a distinguished edge or a distinguished vertex of $G$ is constrained to be on the external face, a bend-minimum orthogonal drawing of $G$ that respects the given constraint can be computed in $O(n)$ time. Furthermore, the computed drawings have at most two bends per edge, which is worst-case optimal.
\end{restatable}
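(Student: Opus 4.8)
The plan is to explore all planar embeddings of $G$ through an SPQR-tree and to run a bottom-up dynamic program over this tree whose states are indexed by \emph{spirality}, so that partial drawings are combined by purely combinatorial rules rather than by the min-cost flow computations of~\cite{DBLP:journals/siamcomp/BattistaLV98,DBLP:conf/compgeom/ChangY17}. I first reduce to the biconnected case: a disconnected graph is drawn component by component, and a connected graph that is not biconnected is handled through its block-cut tree, since cut vertices let distinct blocks be reflected and reattached independently without interfering. Thus the core of the argument concerns a biconnected planar $3$-graph, for which the SPQR-tree captures all embeddings up to reflection once a reference edge is fixed.

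For a biconnected $G$ I root the SPQR-tree at the edge that I want on the external face and process the nodes bottom up. For each node $\mu$ with poles $u,v$ I compute the \emph{cost function} $c_\mu(\sigma)$ giving the minimum number of bends in an orthogonal representation of the pertinent graph of $\mu$ whose two poles are connected with spirality $\sigma$. The key structural fact to establish is that, because every vertex has degree at most three, each $c_\mu$ is convex and piecewise linear in $\sigma$ with only a constant number of breakpoints, so it can be stored and manipulated in $O(1)$ space and time. The base case is a $Q$-node (a real edge), whose cost function is trivial. For an $S$-node I obtain $c_\mu$ as the infimal convolution of the children's cost functions along the cycle of the skeleton $\skel(\mu)$, which preserves convexity and bounded complexity. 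For a $P$-node, the at most three parallel children share the poles $u,v$; I enumerate their $O(1)$ admissible left-to-right arrangements and, for each, add the spiralities forced by the common pole angles, again keeping the description small.

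The main obstacle is the $R$-node, where $\skel(\mu)$ is triconnected and hence has a rigid embedding up to reflection. Here I must assign to every virtual edge of $\skel(\mu)$ a spirality, realize these spiralities by a consistent orthogonal shape of the (degree-three) skeleton, and add the resulting skeleton bends to the sum of the children's costs, all while minimizing the total. This is exactly the step that earlier work delegates to a min-cost flow; my approach is instead to exploit that the skeleton is a triconnected $3$-graph, whose bend-minimum orthogonal shape is governed by a small rectangular core that can be computed directly and then locally corrected against the children's cost functions. Proving that this combinatorial shape computation is both correct and of bounded complexity is the technical heart, and it is what allows each node to be processed in $O(1)$ time and the whole rooted dynamic program to run in $O(n)$.

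Finally I assemble the complexity and the geometric bound. With the reference edge fixed, a single bottom-up pass over the $O(n)$ SPQR-tree nodes reconstructs an optimal drawing from the cost functions in $O(n)$ time, which settles the constrained statement (for a distinguished vertex I root instead at one of its $O(1)$ incident edges). For the unconstrained variable-embedding setting I repeat this $O(n)$-time computation once for each of the $O(n)$ candidate reference edges and keep the best, giving $O(n^2)$ time. Throughout, I maintain as an invariant that every combination rule can be realized with at most two bends per edge, so the produced drawings satisfy this bound; its worst-case optimality follows by exhibiting a planar $3$-graph in which some edge must bend twice in every bend-minimum orthogonal drawing.
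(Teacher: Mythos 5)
Your overall architecture --- root the SPQR-tree at the reference edge, run a bottom-up dynamic program indexed by spirality, avoid min-cost flow entirely, reduce to the biconnected case via the block-cut tree, and try all $O(n)$ reference edges for the unconstrained problem --- matches the paper's. The genuine gaps sit exactly at the points you yourself flag as ``the technical heart.'' First, the claim that every cost function $c_\mu(\sigma)$ is convex and piecewise linear with $O(1)$ breakpoints is asserted but never established. The paper instead proves a structural existence result (Lemma~\ref{le:shapes}): some bend-minimum drawing has at most two bends per edge, every inner P- or R-component X- or D-shaped, and every inner S-component of spirality at most four; this is obtained by passing to the rectilinear image and invoking the characterization of Theorem~\ref{th:RN03} together with the Rahman--Nishizeki--Naznin construction. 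Without a proof of either your convexity claim or an analogue of that lemma, restricting the DP to constantly many states per node is unjustified. Note also that for P- and R-components the relevant parameter is not a single spirality value but the pair of turn numbers of the two contour paths, which is why the paper works with shapes there rather than a one-dimensional cost function. Second, and more seriously, your R-node step is stated only as a goal (``a small rectangular core that can be computed directly and then locally corrected''), not as an argument. This is precisely where the paper's main technical work lies: a variant of the Rahman--Nakano--Nishizeki no-bend algorithm in which designated degree-two vertices are preferentially placed on virtual edges (free by Corollary~\ref{co:k01}), with a classification of contour-path edges as free or costly and useful or useless in order to choose the external corners optimally (Lemma~\ref{le:R-nodes}). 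Your proposal contains no substitute for this.

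Two smaller points. Processing an R-node (or S-node) in $O(1)$ time is impossible, since its skeleton may have $\Theta(n)$ vertices; the correct accounting is $O(n_\mu)$ per node, which sums to $O(n)$ because skeleton sizes sum linearly over the tree. And the root child needs separate treatment that you omit: on the external face the C- and L-shapes arise (contour-path turn numbers summing to six rather than two), so the inner-node combination rules do not apply verbatim; the paper devotes Section~\ref{sse:root-child} to this. Finally, the correctness of the dynamic program also requires an argument that a component can be replaced by any equivalent one inside a global representation (the paper's Theorem~\ref{th:substitution}); it is standard but should be recorded.
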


As in~\cite{DBLP:journals/siamcomp/BattistaLV98} and in~\cite{DBLP:conf/compgeom/ChangY17},  the algorithmic approach of Theorem~\ref{th:main} computes a bend-minimum orthogonal drawing by visiting an SPQR-tree of the input graph. However, it does not need to compute min-cost flows at any steps of the visit, which is the fundamental difference with the previous techniques. This makes it possible to design the first quadratic-time algorithm to compute bend-minimum orthogonal drawings of planar 3-graphs in the variable embedding setting. 

The second part of the statement of Theorem~\ref{th:main} extends previous studies by Nishizeki and Zhou~\cite{DBLP:journals/siamdm/ZhouN08}, who give a first example of a linear-time algorithm in the variable embedding setting for planar 3-graphs that are partial two-trees. The bend-minimum drawings of Theorem~\ref{th:main} have at most two bends per edge, which is a desirable property for an orthogonal representation. We recall that every planar 4-graph (except the octahedron) has an orthogonal drawing with at most two bends per edge~\cite{DBLP:journals/comgeo/BiedlK98,DBLP:journals/dam/LiuMS98}, but minimizing the number of bends may require some edges with a $\Omega(n)$ bends~\cite{DBLP:journals/siamcomp/BattistaLV98,DBLP:journals/ipl/TamassiaTV91}. It is also proven that every planar 3-graph (except $K_4$) has an orthogonal drawing with at most one bend per edge~\cite{DBLP:journals/algorithmica/Kant96}, but the drawings of the algorithm in~\cite{DBLP:journals/algorithmica/Kant96} are not bend-minimum. Finally, a non-flow based algorithm having some similarities with ours is given in~\cite{DBLP:conf/gd/GargL99}; it neither computes bend-minimum drawings nor guarantees at most two bends per edge.

%

The paper is organized as follows. Preliminary definitions and results are in Sec.~\ref{se:preliminaries}. In Sec.~\ref{se:properties-ortho} we prove key properties of bend-minimum orthogonal drawings of planar 3-graphs used in our approach. Sec.~\ref{se:algorithm} describes our drawing algorithms. Open problems are in Sec.~\ref{se:conclusions}. Some (full) proofs are moved to the appendix.

\section{Preliminaries}\label{se:preliminaries}

We assume familiarity with basic definitions on graph connectivity and planarity (see Appendix~\ref{se:app-prel}). If $G$ is a graph, $V(G)$ and $E(G)$ denote the sets of vertices and edges of $G$. We consider \emph{simple} graphs, i.e., graphs with neither self-loops nor multiple edges.
The \emph{degree} of a vertex $v \in V(G)$, denoted as $\deg (v)$, is the number of its neighbors. $\Delta(G)$ denotes the maximum degree of a vertex of $G$; if $\Delta(G) \leq h$ ($h \geq 1$), $G$ is an \emph{$h$-graph}. A graph $G$ is {\em rectilinear planar} if it admits a planar drawing where each edge is either a horizontal or a vertical segment (i.e., it has no bend). Rectilinear planarity testing is NP-complete for planar $4$-graphs~\cite{DBLP:journals/siamcomp/GargT01}, but it is polynomially solvable for planar $3$-graphs~\cite{DBLP:conf/compgeom/ChangY17,DBLP:journals/siamcomp/BattistaLV98} and linear-time solvable for subdivisions of planar triconnected cubic graphs~\cite{DBLP:journals/ieicet/RahmanEN05}. By extending a result of Thomassen~\cite{Th84} on those $3$-graphs that have a rectilinear drawing with all rectangular faces, Rahman et al.~\cite{DBLP:journals/jgaa/RahmanNN03} characterize rectilinear plane $3$-graphs. For a plane graph $G$, let $C_o(G)$ be its external cycle ($C_o(G)$ is simple if $G$ is biconnected). Also, if $C$ is a simple cycle of $G$, $G(C)$ is the plane subgraph of $G$ that consists of $C$ and of the vertices and edges inside $C$. An edge $e=(u,v) \notin E(G(C))$ is a \emph{leg} of $C$ if exactly one of the vertices $u$ and $v$ belongs to $C$; such a vertex is a \emph{leg-vertex} of $C$. If $C$ has exactly $k$ legs and no edge embedded outside $C$ joins two of its vertices, $C$ is a \emph{k-legged cycle} of~$G$.

\begin{theorem}\label{th:RN03}\emph{\cite{DBLP:journals/jgaa/RahmanNN03}}
	Let $G$	be a biconnected plane $3$-graph. $G$ admits an orthogonal drawing without bends if and only if: $(i)$ $C_o(G)$ contains at least four vertices of degree 2; $(ii)$ each $2$-legged cycle contains at least two vertices of degree 2; $(iii)$ each $3$-legged cycle contains at least one vertex of degree 2.
\end{theorem}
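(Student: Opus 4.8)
The plan is to reduce the existence of a bend-free orthogonal drawing to the existence of a consistent \emph{angle labeling} and then to prove the two directions separately. In any orthogonal drawing the angle formed at a vertex by two consecutive edges is a multiple of $90^\circ$; labeling every such angle by a value in $\{90^\circ,180^\circ,270^\circ\}$ yields an orthogonal representation with no bends, and a standard realizability result in the style of Tamassia~\cite{DBLP:journals/siamcomp/Tamassia87} guarantees that any labeling satisfying the local vertex conditions and the per-face conditions can be turned into an actual rectilinear drawing by solving a feasible system of edge-length constraints. Hence it suffices to characterize when such a labeling exists. I would first record the two local facts that drive everything: at a degree-$3$ vertex the three face angles are forced to be $\{90^\circ,90^\circ,180^\circ\}$ (so no $270^\circ$ angle can occur there), whereas at a degree-$2$ vertex the two angles are either $\{90^\circ,270^\circ\}$ or $\{180^\circ,180^\circ\}$. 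In particular, reflex ($270^\circ$) angles can live only at degree-$2$ vertices.

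For \textbf{necessity}, I would use the turning identity for a simple cycle $C$ bounding a rectilinear region: writing $a(v)\in\{1,2,3\}$ for the interior angle at $v$ in units of $90^\circ$, the counterclockwise boundary walk turns by a total of $360^\circ$, and the turn at $v$ equals $(2-a(v))\cdot 90^\circ$, so $\sum_{v\in C}\bigl(2-a(v)\bigr)=4$. I then classify the vertices of a $k$-legged cycle $C$ using the local facts above: a degree-$2$ vertex contributes at most $+1$; a leg-vertex (degree $3$, with its third edge lying \emph{outside} $C$) has its interior angle equal to a single face angle, hence $a(v)\in\{1,2\}$ and it contributes at most $+1$; a degree-$3$ vertex whose third edge lies \emph{inside} $C$ has its interior angle split into two face angles, hence $a(v)\in\{2,3\}$ and it contributes at most $0$. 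Since each leg-vertex hosts exactly one leg, there are exactly $k$ of them, and summing gives $4=\sum_{v\in C}(2-a(v))\le d_2(C)+k$, where $d_2(C)$ is the number of degree-$2$ vertices of $C$. Applying this with $k=0$ to $C_o(G)$ (which has no legs) yields $d_2\ge 4$, i.e.\ $(i)$; with $k=2$ it yields $d_2\ge 2$, i.e.\ $(ii)$; with $k=3$ it yields $d_2\ge 1$, i.e.\ $(iii)$.

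For \textbf{sufficiency} I would argue by induction on $|V(G)|$, constructing the angle labeling directly. The base case is a cycle, drawn as a rectangle after choosing four of its degree-$2$ vertices as corners (available by $(i)$). For the inductive step I would fix four degree-$2$ vertices of $C_o(G)$ as the rectangle corners and split $G$ along a suitable chord path or separating cycle into smaller biconnected plane $3$-graphs, assigning the forced $90^\circ/180^\circ/270^\circ$ angles along the cut. The crucial point is to show that each piece again satisfies $(i)$--$(iii)$: this is exactly where conditions $(ii)$ and $(iii)$ are used, since the split turns some internal separating cycles into outer cycles and must not leave a piece violating the degree-$2$ count. Here I would lean on Thomassen's characterization~\cite{Th84} of plane $3$-graphs admitting a drawing with all rectangular faces as the workhorse for the pieces that decompose into rectangles, placing the unavoidable reflex corners at the degree-$2$ vertices guaranteed by $(ii)$--$(iii)$.

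The main obstacle is the sufficiency direction, and within it the \emph{inheritance} of conditions $(i)$--$(iii)$ under the decomposition: one must choose the splitting cycle or path and the distribution of the four corners so that neither piece is left with too few degree-$2$ vertices on its new outer boundary or on one of its $2$- or $3$-legged cycles. Controlling this case analysis, in particular handling $2$- and $3$-legged cycles that straddle the cut, is the technically delicate part; by contrast, the necessity direction and the reduction to an angle labeling are comparatively routine.
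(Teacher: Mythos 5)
First, a point of reference: the paper does not prove Theorem~\ref{th:RN03} at all --- it is imported verbatim from Rahman, Nishizeki, and Naznin~\cite{DBLP:journals/jgaa/RahmanNN03}, and the only related material in the paper is the sketch of their algorithm in Appendix~\ref{se:app-RN}, which is the constructive form of the sufficiency direction. So your attempt can only be judged against that source. Your \textbf{necessity} argument is correct and complete: the classification of face angles at degree-$2$ and degree-$3$ vertices, the turning identity $\sum_{v\in C}(2-a(v))=4$ for a bend-free cycle, and the observation that in a $3$-graph each vertex of $C$ carries at most one leg together give $4\le d_2(C)+k$ for every $k$-legged cycle, and instantiating $k=0,2,3$ yields conditions $(i)$--$(iii)$. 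This is the standard argument and it is sound.

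The \textbf{sufficiency} direction is where the entire content of the theorem lies, and your proposal leaves it as a plan rather than a proof. Two concrete gaps. First, the decomposition you describe (``split $G$ along a suitable chord path or separating cycle'') is not specified and is not quite the right mechanism: the actual proof designates four degree-$2$ vertices of $C_o(G)$ as corners, declares a $2$-legged (resp.\ $3$-legged) cycle \emph{bad} if it contains fewer than two (resp.\ one) designated vertices, collapses each \emph{maximal} bad cycle $C$ into a supernode, and only then invokes Thomassen's theorem~\cite{Th84} to draw the resulting coarser graph with all faces rectangular; each piece $G(C)$ is then handled recursively with the leg-vertices of $C$ serving as designated corners. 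Second, and decisively, the step you yourself flag as ``technically delicate'' --- that every piece again satisfies the hypotheses, i.e.\ that conditions $(ii)$ and $(iii)$ guarantee enough spare degree-$2$ vertices inside each maximal bad cycle to supply the missing corners of the recursive call, and that the collapsed coarser graph meets Thomassen's condition --- is exactly the theorem's difficulty, and your outline does not establish it. As written, the sufficiency half is a plausible research plan, not a proof; closing it requires carrying out the bad-cycle case analysis of~\cite{DBLP:journals/jgaa/RahmanNN03} (or an equivalent induction) in full.
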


As in~\cite{DBLP:journals/jgaa/RahmanNN03}, we call \emph{bad} any $2$-legged and any $3$-legged cycle that does not satisfy Condition $(ii)$ and $(iii)$ of Theorem~\ref{th:RN03}, respectively.

\begin{figure}[t]
	\centering
	\subfigure[]{\label{fi:spqr-tree-a}}{\includegraphics[width=0.26\columnwidth]{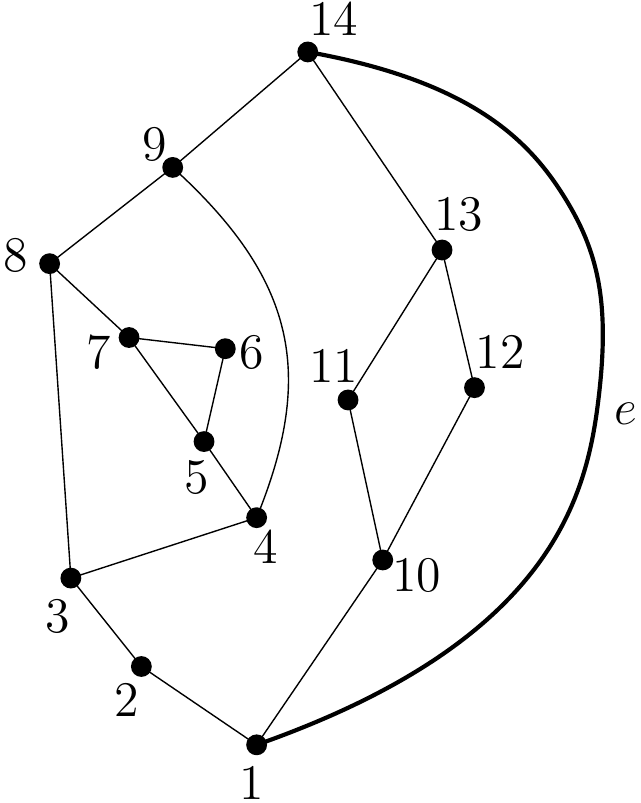}}
	\hfill
	\subfigure[]{\label{fi:spqr-tree-b}}{\includegraphics[width=0.42\columnwidth]{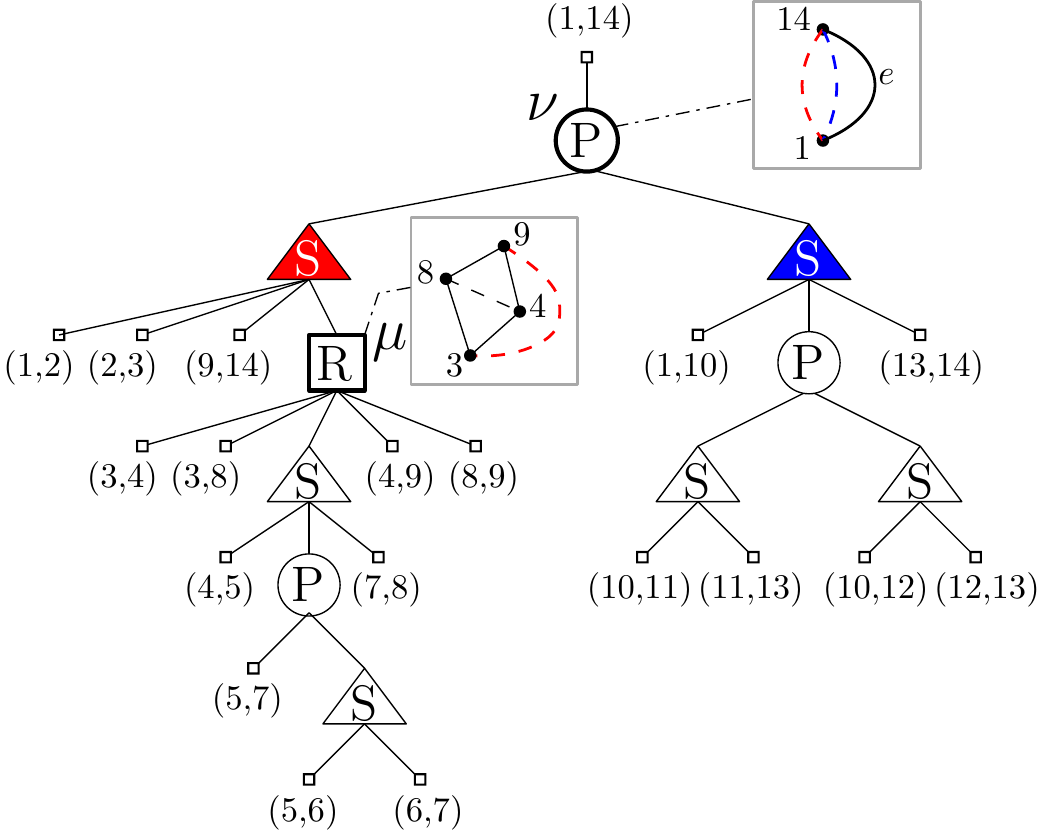}}
	\hfill
	\subfigure[]{\label{fi:spqr-tree-c}}{\includegraphics[width=0.26\columnwidth]{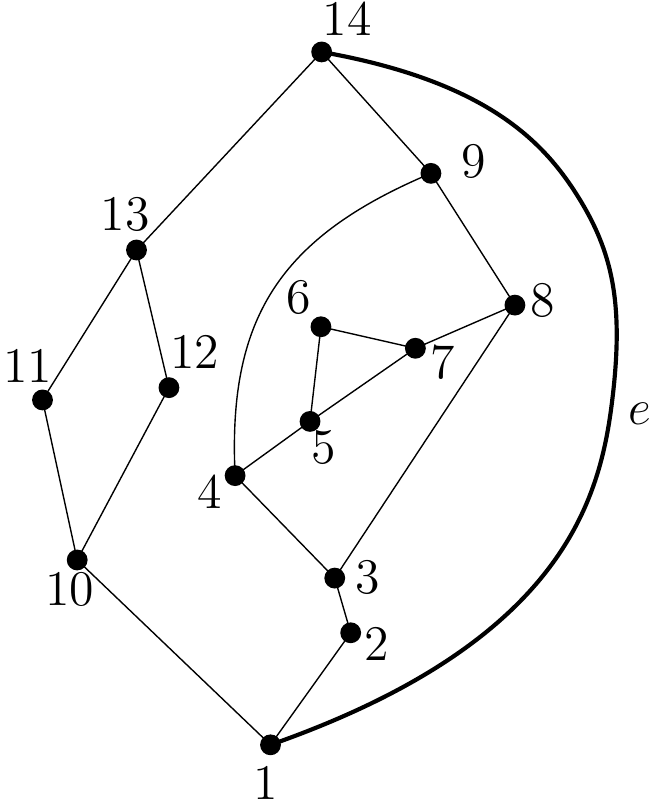}}
	\caption{(a) A plane 3-graph $G$. (b) The SPQR-tree of $G$ with respect to $e$; the skeletons of a P-node $\nu$ and of an R-node $\mu$ are shown. (c) A different embedding of $G$ obtained by changing the embedding of $\skel(\nu)$ and of $\skel(\mu)$.}\label{fi:spqr-tree}
\end{figure}

\myparagraph{SPQR-trees of Planar 3-Graphs.} Let $G$ be a biconnected graph. An \emph{SPQR-tree} $T$ of $G$ represents the decomposition of $G$ into its triconnected components and can be computed in linear time~\cite{DBLP:books/ph/BattistaETT99,DBLP:conf/gd/GutwengerM00,DBLP:journals/siamcomp/HopcroftT73}. Each triconnected component corresponds to a node $\mu$ of $T$; the triconnected component itself is called the \emph{skeleton} of $\mu$ and denoted as $\skel(\mu)$. A node $\mu$ of $T$ can be of one of the following types: $(i)$ \emph{R-node}, if $\skel(\mu)$ is a triconnected graph; $(ii)$ \emph{S-node}, if $\skel(\mu)$ is a simple cycle of length at least three; $(iii)$ \emph{P-node}, if $\skel(\mu)$ is a bundle of at least three parallel edges; $(iv)$ \emph{Q-nodes}, if it is a leaf of $T$; in this case the node represents a single edge of the graph and its skeleton consists of two parallel edges.
Note that, neither two $S$- nor two $P$-nodes are adjacent in~$T$. A \emph{virtual edge} in $\skel(\mu)$ corresponds to a tree node $\nu$ adjacent to $\mu$ in $T$. If $T$ is rooted at one of its Q-nodes $\rho$, every skeleton (except the one of $\rho$) contains exactly one virtual edge that has a counterpart in the skeleton of its parent: This virtual edge is the \emph{reference edge} of $\skel(\mu)$ and of $\mu$, and its endpoints are the \emph{poles} of $\skel(\mu)$ and of $\mu$. The edge of $G$ corresponding to the root $\rho$ of $T$ is the  \emph{reference edge} of $G$, and $T$ is the SPQR-tree of $G$ \emph{with respect to $e$}. For every node $\mu \neq \rho$ of $T$, the subtree $T_\mu$ rooted at $\mu$ induces a subgraph $G_\mu$ of $G$ called the \emph{pertinent graph} of $\mu$, which is described by $T_\mu$ in the decomposition: The edges of $G_\mu$ correspond to the Q-nodes (leaves) of $T_\mu$. Graph $G_\mu$ is also called a \emph{component} of $G$ with respect to the reference edge $e$, namely $G_\mu$ is a P-, an R-, or an S-component depending on whether $\mu$ is a P-, an R-, or an S-component, respectively.

The SPQR-tree $T$ rooted at a Q-node $\rho$ implicitly describes all planar embeddings of $G$ with the reference edge of $G$ on the external face. All such embeddings are obtained by combining the different planar embeddings of the skeletons of P- and R-nodes: For a P-node $\mu$, the different embeddings of $\skel(\mu)$ are the different permutations of its non-reference edges. If $\mu$ is an R-node, $\skel(\mu)$ has two possible planar embeddings, obtained by flipping $\skel(\mu)$ minus its reference edge at its poles. See Fig.\ref{fi:spqr-tree} for an illustration.
The child node of $\rho$ and its pertinent graph are called the \emph{root child} of $T$ and the \emph{root child component} of $G$, respectively. An \emph{inner node} of $T$ is neither the root nor the root child of $T$. The pertinent graph of an inner node is an \emph{inner component} of $G$. The next lemma gives basic properties of $T$ when $\Delta(G) \leq 3$.

\begin{restatable}{lemma}{spqrtreethreegraph}\label{se:spqr-tree-3-graph}
Let $G$ be a biconnected planar $3$-graph and let $T$ be the SPQR-tree of $G$ with respect to a reference edge $e$. The following properties hold:

\smallskip\noindent{\em \bf \textsf{T1}} Each P-node $\mu$ has exactly two children, one being an S-node and the other being an S- or a Q-node; if $\mu$ is the root child, both its children are S-nodes.

\noindent{\em \bf \textsf{T2}} Each child of an R-node is either an S-node or a Q-node.

\noindent{\em \bf \textsf{T3}} For each inner S-node $\mu$, the edges of $\skel(\mu)$ incident to the poles of $\mu$ are (real) edges of $G$. Also, there cannot be two incident virtual edges in $\skel(\mu)$.
\end{restatable}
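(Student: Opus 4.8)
The plan is to reduce all three properties to a single degree-counting principle applied to the poles of the relevant nodes and to the internal vertices of S-node skeletons. I will use two structural facts throughout. First, if $\nu$ is a P- or an R-node with poles $s,t$, then $\deg_{G_\nu}(s)\ge 2$ and $\deg_{G_\nu}(t)\ge 2$: a P-node skeleton is a bundle of at least three parallel edges, so at least two non-reference edges are incident to each pole; a pole of an R-node has degree at least three in its triconnected skeleton, so at least two non-reference edges survive in $G_\nu$. In either case each such edge expands to at least one real edge of $G$ at the pole. Second, I will repeatedly use the following accounting, call it $(\star)$. For every non-root node $\nu$ with pole $s$ one has $\deg(s)\ge \deg_{G_\nu}(s)+1$, because $\{s,t\}$ is a split pair and, by biconnectivity, $s$ is incident to at least one real edge of $G$ outside $G_\nu$; moreover, if the parent of $\nu$ is a P- or an R-node, then $\deg(s)\ge \deg_{G_\nu}(s)+2$, since in the parent skeleton $s$ is incident to at least two edges other than the virtual edge of $\nu$ (the two remaining parallel edges of a P-node, or two of the at-least-three skeleton edges of an R-node), each contributing at least one real edge of $G$ at $s$ outside $G_\nu$.

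Given $(\star)$, \textsf{T1} and \textsf{T2} are short. For \textsf{T1}, let $\mu$ be a P-node with $c\ge 2$ children. Each child contributes at least one edge at a pole $s$, so $\deg_{G_\mu}(s)\ge c$, and $(\star)$ gives $3\ge\deg(s)\ge c+1$, whence $c=2$ and each child contributes exactly one edge at $s$. A child that is an R-node would contribute at least two, so no child is an R-node; no child is a P-node because two P-nodes are never adjacent; hence each child is an S- or a Q-node. Two Q-node children would create a pair of parallel edges between the poles, contradicting simplicity, so at least one child is an S-node. If $\mu$ is the root child, its reference edge is the real edge joining the poles, and a Q-node child would again be parallel to it; thus both children are S-nodes. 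For \textsf{T2}, let $\mu$ be an R-node and $\nu$ a child with pole $s$. If $\nu$ were a P- or an R-node, then $\deg_{G_\nu}(s)\ge 2$, and since the parent $\mu$ is an R-node, $(\star)$ yields $\deg(s)\ge 4$, a contradiction; hence every child of $\mu$ is an S- or a Q-node.

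For \textsf{T3}, write the skeleton of the inner S-node $\mu$ as the cycle $s,v_1,\dots,v_{k-1},t$ closed by the reference edge, with poles $s,t$. To prove that $(s,v_1)$ and $(v_{k-1},t)$ are real, suppose $(s,v_1)$ were virtual, corresponding to a child $\nu$. Then $\nu$ is neither a Q-node (it is virtual) nor an S-node (S-nodes are never adjacent), so it is a P- or an R-node with $\deg_{G_\nu}(s)\ge 2$; since in $G_\mu$ the pole $s$ is incident only to the expansion of $(s,v_1)$, we get $\deg_{G_\mu}(s)\ge 2$. The parent of $\mu$ is a P- or an R-node, so $(\star)$ gives $\deg(s)\ge 4$, a contradiction; the same argument applies to $(v_{k-1},t)$. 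To show that no two virtual edges are incident, observe that at each pole the reference edge is the only virtual edge, since the other incident skeleton edge is real by the previous step. At an internal vertex $v_i$, two incident virtual edges would yield two children that are P- or R-nodes (again neither Q- nor S-nodes), each contributing at least two edges at $v_i$; as $v_i$ is internal, $\deg(v_i)=\deg_{G_\mu}(v_i)\ge 4$, a contradiction.

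I expect the principle $(\star)$ to be the only delicate point: one must argue precisely that a pole of a component connects to the rest of $G$ through the prescribed number of real edges, treating uniformly the inner nodes and the boundary case in which the parent is the root child and its reference edge is the real edge $e$. Once $(\star)$ is established, the three properties follow by the routine degree counts above.
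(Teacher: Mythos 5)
Your proof is correct and follows essentially the same route as the paper's: all three properties are obtained by counting, at the poles and at internal skeleton vertices, the real edges that each component contributes, and comparing against $\Delta(G)\le 3$. Your principle $(\star)$ merely packages the degree bookkeeping that the paper carries out inline for each case.
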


\section{Properties of Bend-Minimum Orthogonal Representations of Planar 3-Graphs}\label{se:properties-ortho}
We prove relevant properties of bend-minimum orthogonal drawings of planar $3$-graphs that are independent of vertex and bend coordinates, but only depend on the vertex angles and edge bends. To this aim, we recall the concept of \emph{orthogonal representation}~\cite{DBLP:journals/siamcomp/Tamassia87} and define some types of ``shapes'' that we use to construct bend-minimum orthogonal representations.

\smallskip\noindent{\bf Orthogonal Representations}. Let $G$ be a plane $3$-graph. If $v \in V(G)$ and if $e_1$ and $e_2$ are two (possibly coincident) edges incident to $v$ that are consecutive in the clockwise order around $v$, we say that $a = \langle e_1,v,e_2 \rangle$ is an \emph{angle at $v$} of $G$ or simply an \emph{angle} of $G$. Let $\Gamma$ and $\Gamma'$ be two embedding-preserving orthogonal drawings of $G$. We say that $\Gamma$ and $\Gamma'$ are \emph{equivalent} if: $(i)$ For any angle $a$ of $G$, the geometric angle corresponding to $a$ is the same in $\Gamma$ and $\Gamma'$, and $(ii)$ for any edge $e=(u,v)$ of $G$, the sequence of left and right bends along $e$ moving from $u$ to $v$ is the same in $\Gamma$ and in $\Gamma'$. An \emph{orthogonal representation} $H$ of $G$ is a class of equivalent orthogonal drawings of $G$; $H$ can be described by the embedding of $G$ together with the geometric value of each angle of $G$ ($90$, $180$, $270$ degrees)\footnote{Angles of $360$ degrees only occur at 1-degree vertices; we can avoid to specify them.} and with the sequence of left and right bends along each edge. Figure~\ref{fi:ortho-a} shows a bend-minimum orthogonal representation of the graph in Fig.~\ref{fi:spqr-tree-a}.

\begin{figure}[t]
	\centering
	\subfigure[]{\label{fi:ortho-a}\includegraphics[width=0.25\columnwidth]{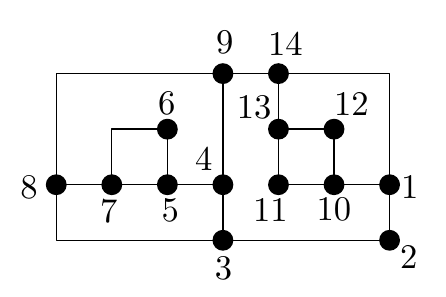}}
	\hfill
	\subfigure[]{\label{fi:ortho-b}\includegraphics[width=0.25\columnwidth]{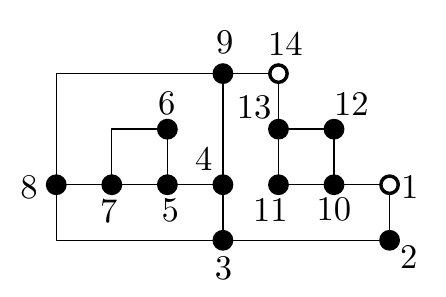}}
	\hfill
	\subfigure[]{\label{fi:ortho-c}\includegraphics[width=0.25\columnwidth]{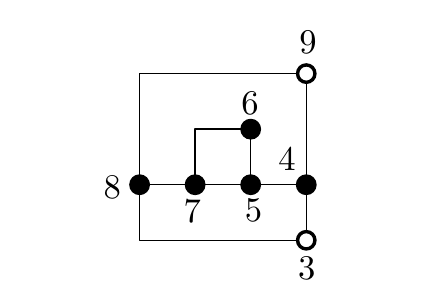}}
	\caption{(a) A bend-minimum orthogonal representation $H$ with four bends of the graph in Fig.~\ref{fi:spqr-tree-a}. (b) The component $H_\nu$, which is L-shaped; the two poles of the component are the white vertices. (c) The component $H_\mu$, which is D-shaped.}\label{fi:ortho}
\end{figure}

Let $p$ be a path between two vertices $u$ and $v$ in $H$. The \emph{turn number} of $p$ is the absolute value of the difference between the number of right and the number of left turns encountered along $p$ moving from $u$ to $v$ (or vice versa). The turn number of $p$ is denoted by $t(p)$. A turn along $p$ is caused either by a bend on an edge of $p$ or by an angle of $90/270$ degrees at a vertex of $p$. For example, $t(p)=2$ for the path $p = \langle 3, 4, 5, 6, 7 \rangle$ in the orthogonal representation of Fig.~\ref{fi:ortho-a}. We remark that if $H$ is a bend-minimum orthogonal representation, the bends along an edge, going from an end-vertex to the other, are all left or all right turns~\cite{DBLP:journals/siamcomp/Tamassia87}.

\smallskip\noindent{\bf Shapes of Orthogonal Representations}. Let $G$ be a biconnected planar $3$-graph, $T$ be the SPQR-tree of $G$ with respect to a reference edge $e \in E(G)$, and $H$ be an orthogonal representation of $G$ with $e$ on the external face. For a node $\mu$ of $T$, denote by $H_\mu$ the restriction of $H$ to a component $G_\mu$. We also call $H_\mu$ a \emph{component} of $H$. In particular, $H_\mu$ is a P-, an R-, or an S-component depending on whether $\mu$ is a P-, an R-, or an S-component, respectively. If $\mu$ is the root child of $T$, then $H_\mu$ is the \emph{root child component} of $H$. Denote by $u$ and $v$ the two poles of $\mu$ and let $p_l$ and $p_r$ be the two paths from $u$ to $v$ on the external boundary of $H_\mu$, one walking clockwise and the other walking counterclockwise. These paths are the {\em contour paths} of $H_\mu$. If $\mu$ is an S-node, $p_l$ and $p_r$ share some edges (they coincide if $H_\mu$ is just a sequence of edges). If $\mu$ is either a P- or an R-node, $p_l$ and $p_r$ are edge disjoint; in this case, we define the following \emph{shapes} for $H_\mu$, depending on $t(p_l)$ and $t(p_r)$ and where the poles are external corners:

\smallskip
\noindent$-$ $H_\mu$ is \emph{C-shaped}, or \emph{\C-shaped}, if $t(p_l)=4$ and $t(p_r)=2$, or vice versa;

\noindent$-$ $H_\mu$ is \emph{D-shaped}, or \emph{\D-shaped}, if $t(p_l)=0$ and $t(p_r)=2$, or vice versa;

\noindent$-$ $H_\mu$ is \emph{L-shaped}, or \emph{\L-shaped}, if $t(p_l)=3$ and $t(p_r)=1$, or vice versa;

\noindent$-$ $H_\mu$ is \emph{X-shaped}, or \emph{\X-shaped}, if $t(p_l)=t(p_r)=1$.

\smallskip
For example, $H_\nu$ in Fig.~\ref{fi:ortho-b} is \L-shaped, while $H_\mu$ in Fig.~\ref{fi:ortho-c} is \D-shaped. Concerning S-components, the following lemma rephrases a result in~\cite[Lemma 4.1]{DBLP:journals/siamcomp/BattistaLV98}, and it is also an easy consequence of Property \textsf{T3} in Lemma~\ref{se:spqr-tree-3-graph}.

\begin{lemma}\label{le:k-spiral}
Let $H_\mu$ be an inner S-component with poles $u$ and $v$ and let $p_1$ and $p_2$ be any two paths connecting $u$ and $v$ in $H_\mu$. Then $t(p_1)=t(p_2)$.
\end{lemma}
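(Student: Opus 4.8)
The plan is to exploit Property \textsf{T3} of Lemma~\ref{se:spqr-tree-3-graph} to reduce the claim to a local statement about a single child of $\mu$, and then to settle that local statement with the angle-sum identity for the internal faces of an orthogonal representation.

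First I would describe the shape of $G_\mu$. Since $\mu$ is an inner S-node, $\skel(\mu)$ is a cycle; deleting its reference edge leaves a path $u=w_0,w_1,\dots,w_k=v$ whose edges correspond to the children $\mu_1,\dots,\mu_k$ of $\mu$. By \textsf{T3} the two extreme edges $(w_0,w_1)$ and $(w_{k-1},w_k)$ are real and no two consecutive edges of this path are virtual. Consequently $u$ and $v$ have degree one in $G_\mu$, so every $u$--$v$ path in $H_\mu$ leaves $u$ and enters $v$ along the same real edges; and each $w_i$ with $0<i<k$ is a cut vertex of $G_\mu$, so every $u$--$v$ path visits all the $w_i$ in order and splits as $q_1q_2\cdots q_k$, where $q_i$ is a $w_{i-1}$--$w_i$ path inside the child component $G_{\mu_i}$. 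For a Q-child the sub-path $q_i$ is forced, so two paths $p_1,p_2$ can differ only inside the P- and R-children; moreover, again by \textsf{T3}, every such child is flanked by real edges, so the direction along which any path enters and leaves it is the same for $p_1$ and $p_2$. This is exactly the point at which \textsf{T3} is used.

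Next I would work with the signed turn number $\rho(p)$, the number of right minus left turns along $p$ (counting bends and the $90^\circ/270^\circ$ vertex angles), so that $t(p)=|\rho(p)|$ and $\rho$ is additive under concatenation. It suffices to prove $\rho(p_1)=\rho(p_2)$, and by the reduction above it suffices to prove that, inside a single P- or R-child $K$ with poles $a,b$, the signed turn of any $a$--$b$ sub-path is the same; the claim for $\mu$ then follows by induction along the chain, since the headings of the flanking real edges are fixed and the accumulated rotation reaching each $w_i$ is therefore the same integer for all paths. Fixing two sub-paths $\pi_1,\pi_2$ through $K$, their symmetric difference is a disjoint union of simple cycles, each of which bounds a union of internal faces of $H$ and is therefore an orthogonal polygon; traversing such a cycle the total signed turn is $\pm4$, because every internal face of an orthogonal representation has total turn $+4$. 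Writing $\rho(p_1)-\rho(p_2)$ as the sum of the contributions of these cycles and expanding each contribution into the internal turns of the two sub-paths plus the junction turns at $a$ and $b$, the $\pm4$ from each face cancels against the junction terms: the real edges flanking $K$ fix the entering and leaving headings common to $p_1$ and $p_2$, while planarity forces the relative left/right order of $\pi_1$ and $\pi_2$ at $a$ to agree with their order at $b$ and with the orientation of the enclosed face. Substituting, every contribution vanishes and $\rho(p_1)=\rho(p_2)$, hence $t(p_1)=t(p_2)$.

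The easy half is the congruence $\rho(p_1)\equiv\rho(p_2)\pmod 4$, which is immediate from the fact that $p_1$ and $p_2$ share their first and last edges and so leave $u$ and reach $v$ with the same directions. I expect the main obstacle to be promoting this congruence to an exact equality: a priori two $u$--$v$ paths with the same end-directions could have signed turns differing by a nonzero multiple of $4$ (a spiral). Ruling this out is precisely where planarity enters --- the region between the two paths is a disk tiled by internal faces, so there is no net extra winding --- and this is the step that makes the bookkeeping of junction angles against the $\pm4$ face-turns cancel exactly rather than only modulo $4$.
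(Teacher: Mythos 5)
Your proof is correct and follows exactly the route the paper indicates: the paper gives no explicit proof of this lemma, instead citing \cite[Lemma~4.1]{DBLP:journals/siamcomp/BattistaLV98} and noting it is ``an easy consequence of Property \textsf{T3}'', and your argument fleshes out precisely that --- \textsf{T3} forces degree-one poles and a chain decomposition through cut vertices, and the additive signed rotation together with the turn-number-four identity for regions tiled by internal faces yields the exact equality. The only cosmetic remark is that the cancellation bookkeeping in your middle paragraph is stated somewhat loosely, but the underlying computation is the standard one (the same face-rotation characterization the paper invokes in the proof of Theorem~\ref{th:substitution}).
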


Based on Lemma~\ref{le:k-spiral}, we describe the shape of an inner S-component $H_\mu$ in terms of the turn number of any path $p$ between its two poles: We say that $H_\mu$ is \emph{$k$-spiral} and has \emph{spirality} $k$ if $t(p)=k$. The notion of spirality of an orthogonal component was introduced in~\cite{DBLP:journals/siamcomp/BattistaLV98}. Differently from~\cite{DBLP:journals/siamcomp/BattistaLV98}, we restrict the definition of spirality to inner S-components and we always consider absolute values, instead of both positive and negative values depending on whether the left turns are more or fewer than the right turns. For instance, in the representation of Fig.~\ref{fi:ortho-a} the two series with poles $\{1,14\}$ (the two filled S-nodes in Fig.~\ref{fi:spqr-tree-b}) have spirality three and one, respectively; the series with poles $\{4,8\}$ (child of the R-node) has spirality zero, while the series with poles $\{5,7\}$ has spirality two.
%

We now give a key result that claims the existence of a bend-minimum orthogonal representation with specific properties for any biconnected planar $3$-graph. This result will be used to design our drawing algorithm. Given an orthogonal representation $H$, we denote by $\rect{H}$ the orthogonal representation obtained from $H$ by replacing each bend with a dummy vertex: $\rect{H}$ is the \emph{rectilinear image} of $H$; a dummy vertex in $\rect{H}$ is a \emph{bend vertex}. Also, if $w$ is a degree-$2$ vertex with neighbors $u$ and $v$, \emph{smoothing} $w$ is the reverse operation of an edge subdivision, i.e., it replaces the two edges $(u,w)$ and $(w,v)$ with the single edge $(u,v)$.


\begin{restatable}{lemma}{shapes}\label{le:shapes}
A biconnected planar $3$-graph $G$ with a distinguished edge $e$
has a bend-minimum orthogonal representation $H$ with $e$ on the external face such~that:

\smallskip\noindent{\em \bf \textsf{O1}} Every edge of $H$ has at most two bends, which is worst-case optimal.

\noindent{\em \bf \textsf{O2}} Every inner P-component or R-component of $H$ is either \X- or \D-shaped.

\noindent{\em \bf \textsf{O3}} Every inner S-component of $H$ has spirality at most four.
\end{restatable}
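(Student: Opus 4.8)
The plan is to take a bend-minimum orthogonal representation $H$ of $G$ with $e$ on the external face (one exists, since $e$ can always be placed on the outer face of an embedding realizing the bend minimum) and to normalize it by local surgery that never creates bends. Among all bend-minimum representations with $e$ external I would fix one that, as a secondary objective, minimizes the potential $\Phi(H)$ equal to the sum over all inner P- and R-components of $\max\{t(p_l),t(p_r)\}$, plus the sum of the spiralities of all inner S-components; the claim is that this extremal $H$ already satisfies \textsf{O1}--\textsf{O3}. The key quantitative fact is a turn equation for every inner P- or R-component $H_\mu$: orienting both contour paths from $u$ to $v$ and scoring a left turn $+1$ and a right turn $-1$, the signed turns $\sigma(p_l),\sigma(p_r)$ satisfy $\sigma(p_r)-\sigma(p_l)=2$, because $H_\mu$ bounds a simple orthogonal region whose boundary has total turn $4$ and whose two poles are convex corners contributing $2$. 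Thus a single integer governs the shape: the X-, D-, L-, and C-shapes are exactly the cases $\max\{t(p_l),t(p_r)\}=1,2,3,4$, and any more ``curled'' admissible shape (value $\ge 5$) can only arise by inserting a full $360^\circ$ winding, i.e. four additional contour turns.

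For \textsf{O3}, recall that by Lemma~\ref{le:k-spiral} the spirality of an inner S-component $H_\mu$ equals the turn number of any pole-to-pole path, and that the parent of an S-node is a P- or an R-node (no two S-nodes are adjacent). The embedding of that parent fixes the relative orientation of the two poles of $\mu$ and hence determines the spirality of $H_\mu$ only up to full turns, i.e. modulo $4$. If the spirality exceeded $4$, the component would contain a $360^\circ$ winding that encloses area and is realized by at least four turns on a contour path; unwinding it leaves the angles at the two poles unchanged, preserves planarity and the external-face constraint, and removes turns, so it does not increase the number of bends while strictly decreasing $\Phi$ --- contradicting extremality. That the surviving bound is $4$ and not smaller is checked against the few relative pole orientations admitted by the forced shape of a degree-$3$ pole and by property \textsf{T3} (which guarantees real edges incident to the poles of an inner S-node).

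For \textsf{O2} I would separate P- and R-nodes. A P-node has exactly two children by property \textsf{T1} (Lemma~\ref{se:spqr-tree-3-graph}), so $H_\mu$ is just its two contour paths meeting at the poles; each pole has degree three in $G$, and at a degree-$3$ vertex the three incident angles must be $90,90,180$, forcing a T-shape. This T-shape is incompatible with $\max\{t(p_l),t(p_r)\}\ge 3$ unless a contour path carries an avoidable winding, so the unwinding move of the previous paragraph converts every L- or C-shaped P-component into an X- or D-shaped one at no extra bend cost, again contradicting extremality. For an R-node the skeleton is triconnected, hence rigid up to a flip at the poles; here I would combine the turn equation with the degree-$3$ constraint at the attachment vertices of the two contour paths to show that each rigid alternative is already X- or D-shaped, and that whenever flipping the skeleton or reshaping an incident S-child would produce a curled contour it can be exchanged for the flat alternative without creating bends.

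Finally, \textsf{O1} follows by propagating \textsf{O2} and \textsf{O3} down the tree: inside an X- or D-shaped P/R-component, and along an S-component of spirality at most $4$, a single edge --- a Q-node --- inherits at most two contour turns and therefore at most two bends; worst-case optimality is witnessed by a fixed small planar $3$-graph in which some edge must bend twice in every bend-minimum drawing. The step I expect to be the main obstacle is verifying the correctness of the local surgeries, namely that unwinding a winding or flipping/reshaping a subcomponent (i) respects the fixed $90/180/270$ angles at the two poles, so the modified piece still plugs into its parent; (ii) preserves planarity and keeps $e$ external; and (iii) never increases the global bend count. The subtlety is that a single contour turn may be realized either by an edge bend or by a $90/270$ vertex angle, and it is precisely the degree-$3$ hypothesis --- through the forced T-shape at degree-$3$ vertices and property \textsf{T3} --- that keeps this exchange controlled and makes both the bound in \textsf{O3} and the dichotomy in \textsf{O2} go through.
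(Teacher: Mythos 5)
Your plan identifies some of the right invariants (the signed turn relation between the two contour paths, the role of \textsf{T3}), but the two load-bearing steps are not actually carried out, and one deduction is wrong. First, \textsf{O1} does not follow from \textsf{O2} and \textsf{O3}: the spirality of a path is the \emph{absolute difference} between its left and right turns, so a pole-to-pole path of turn number at most four can still contain an edge with many bends compensated by opposite turns elsewhere on the path. The paper proves \textsf{O1} first and by an entirely different mechanism: it passes to the rectilinear image $\rect{H}$, observes that its underlying plane graph satisfies the conditions of Theorem~\ref{th:RN03}, and shows that smoothing one of three bend vertices on a triply-bent edge cannot create a bad cycle (a bad cycle cannot contain two degree-2 vertices), contradicting bend-minimality; a separate counting argument handles an external edge when the outer face has exactly four degree-2 vertices, and $K_4$ witnesses worst-case optimality.

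Second, your treatment of \textsf{O2} and \textsf{O3} rests on an ``unwinding'' surgery whose existence is precisely the difficulty you defer to the last sentences. Converting an L- or C-shaped inner component into an X- or D-shaped one, or reducing the spirality of an S-component, changes the turn numbers of the two faces flanking the contour paths; restoring those turn numbers to four may force new $90/270$ angles or bends elsewhere, so ``removes turns without increasing bends'' is an assertion, not a consequence of your potential-minimization setup. Your claim that for an R-node ``each rigid alternative is already X- or D-shaped'' is also false as stated: triconnectivity fixes the embedding up to a flip, not the angle assignment, and an R-component can certainly be realized L- or C-shaped in some orthogonal representation. The paper does not attempt this surgery; it imports \cite[Lemma 3]{DBLP:journals/jgaa/RahmanNN03}, which directly supplies a no-bend representation of $\rect{G}$ in which every 2-legged cycle is X- or D-shaped with the leg-vertices as poles (this yields \textsf{O2} because the boundary of an inner P- or R-component is a 2-legged cycle and the bend vertices are unchanged, so bend-minimality is preserved), and it then obtains \textsf{O3} by tracing how that specific algorithm places its designated degree-2 corners, via a case analysis on whether the S-component lies inside a maximal bad cycle. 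To make your route work you would need to define the unwinding move explicitly and verify, face by face, that all turn numbers remain four without introducing bends; as written, the proposal assumes the lemma's hardest content rather than proving it.
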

\begin{proof}[sketch]
We prove in three steps the existence of a bend-minimum orthogonal representation $H$ that satisfies~\textsf{O1-O3}. We start by a bend-minimum orthogonal representation of $G$ with $e$ on the external face, and in the first step we prove that it either satisfies \textsf{O1} or it can be locally modified, without changing its planar embedding, so to satisfy \textsf{O1}. In the second step, we prove that from the orthogonal representation obtained in the first step we can derive a new orthogonal representation (still with same embedding) that satisfies \textsf{O2} in addition to \textsf{O1}. Finally, we prove that this last representation also satisfies \textsf{O3}. 

\smallskip\noindent{\bf Step~1: Property~\textsf{O1}}. Suppose that $H$ is a bend-minimum orthogonal representation of $G$ with $e$ on the external face and having an edge $g$ (possibly $g = e$) with at least three bends. Let $\rect{H}$ be the rectilinear image of $H$, and let $\rect{G}$ be the plane graph underlying $\rect{H}$. Since $\rect{H}$ has no bend, $\rect{G}$ satisfies Conditions~$(i)-(iii)$ of Theorem~\ref{th:RN03}. Let $v_1$, $v_2$, $v_3$ be three bend vertices in $\rect{H}$ that correspond to three bends of $g$ in $H$.
Assume first that $g$ is an internal edge of $G$ and let $\rect{G'}$ be the plane graph obtained from $\rect{G}$ by smoothing $v_1$. We claim that $\rect{G'}$ still satisfies Conditions~$(i)-(iii)$ of Theorem~\ref{th:RN03}. Indeed, if this is not the case, there must be a bad cycle in $\rect{G'}$ that contains both $v_2$ and $v_3$. This is a contradiction, because no bad cycle can contain two vertices of degree two. Hence, there exists an (embedding-preserving) representation $\rect{H'}$ of $\rect{G'}$ without bends, which is the rectilinear image of an orthogonal representation of $G$ with fewer bends than $H$, a contradiction.
Assume now that $g$ is on the external cycle $C_o(G)$ of $G$. If $C_o(\rect{G})$ contains more than four vertices of degree two, we can smooth $v_1$ and apply the same argument as above to contradict the optimality of $H$ (note that, such a smoothing does not violate Condition $(i)$ of Theorem~\ref{th:RN03}). Suppose vice versa that $C_o(\rect{G})$ contains exactly four vertices of degree two (three of them being $v_1$, $v_2$, and $v_3$). In this case, just smoothing $v_1$ violates Condition~$(i)$ of Theorem~\ref{th:RN03}. However, we can smooth $v_1$ and subdivide an edge of $C_o(\rect{G}) \cap C_o(G)$ (such an edge exists since $C_o(G)$ has at least three edges and, by hypothesis and a simple counting argument, at least one of its edges has no bend in $H$). The resulting plane graph $\rect{G''}$ still satisfies the three conditions of Theorem~\ref{th:RN03} and admits a representation $\rect{H''}$ without bends; the representation of which $\rect{H''}$ is the rectilinear image is a bend-minimum orthogonal representation of $G$ with at most two bends per edge.
To see that two bends per edge is worst-case optimal, just consider a bend-minimum representation of the complete graph $K_4$.

\smallskip\noindent{\bf Step~2: Property~\textsf{O2}}. Let $H$ be a bend-minimum orthogonal representation of $G$ that satisfies $\textsf{O1}$ and let $\rect{H}$ be its rectilinear image. The plane underlying graph $\rect{G}$ of $\rect{H}$ satisfies the three conditions of Theorem~\ref{th:RN03}. Rhaman, Nishizeki, and Naznin~\cite[Lemma 3]{DBLP:journals/jgaa/RahmanNN03} prove that, in this case, $\rect{G}$ has an embedding-preserving orthogonal representation $\rect{H'}$ without bends in which every $2$-legged cycle $C$ is either \X-shaped or \D-shaped, where the two poles of the shape are the two leg-vertices of $C$. On the other hand, if $G_\mu$ is an inner P- or R-component, the external cycle $C_o(G_\mu)$ is a $2$-legged cycle of $G$, where the two leg-vertices of $C_o(G_\mu)$ are the poles of $G_\mu$. Hence, the representation $H'$ of $G$ whose rectilinear image is $\rect{H'}$ satisfies \textsf{O2}, as $H'_\mu$ is either \X-shaped or \D-shaped. Also, the bends of $H'$ are the same as in $H$, because the bend vertices of $\rect{H}$ coincide with those of $\rect{H'}$. Hence, $H'$ still satisfies \textsf{O1} and has the minimum number of bends.

\smallskip\noindent{\bf Step~3: Property~\textsf{O3}}. Suppose now that $H$ is a bend-minimum orthogonal representation of $G$ (with $e$ on the external face) that satisfies both \textsf{O1} and \textsf{O2}. More precisely, assume that $H = H'$ is the orthogonal representation obtained in the previous step, where its rectilinear image $\rect{H}$ is computed by the algorithm of Rhaman et al.~\cite{DBLP:journals/jgaa/RahmanNN03}. By a careful analysis of how this algorithm works, we prove that each series gets spirality at most four in $H$ (see Appendix~\ref{se:app-RN}).
\end{proof}

\section{Drawing Algorithm}\label{se:algorithm}
Let $G$ be a biconnected 3-planar graph with a distinguished edge $e$ and let $T$ be the SPQR-tree of $G$ with respect to $e$. Sec.~\ref{sse:inner} gives a linear-time algorithm to compute bend-minimum orthogonal representations of the inner components of $T$. Sec.~\ref{sse:root-child} handles the root child of $T$ to complete a bend-minimum representation with $e$ on the external face and it proves Theorem~\ref{th:main}. Lemma~\ref{le:shapes} allows us to restrict our algorithm to search for a representation satisfying Properties~\textsf{O1-O3}.

%
%
\subsection{Computing Orthogonal Representations for Inner Components}\label{sse:inner}

Let $T$ be the SPQR-tree of $G$ with respect to reference edge~$e$ and let~$\mu$ be an inner node of $T$. A key ingredient of our algorithm is the concept of `equivalent' orthogonal representations of $G_\mu$. Intuitively, two representations of $G_\mu$ are equivalent if one can replace the other in any orthogonal representation of $G$. Similar equivalence concepts have have been used for orthogonal drawings~\cite{DBLP:journals/siamcomp/BattistaLV98,dlp-chvrp-14}.
As we shall prove (see Theorem~\ref{th:substitution}), for planar $3$-graphs a simpler definition of equivalent representations suffices. If $\mu$ is a P- or an R-node, two representations $H_\mu$ and $H'_\mu$ are \emph{equivalent} if they are both \D-shaped or both \X-shaped. If $\mu$ is an inner S-node, $H_\mu$ and $H'_\mu$ are \emph{equivalent} if they have the same spirality.

\begin{restatable}{lemma}{contourpaths}\label{le:contour-paths}
If $H_\mu$ and $H'_\mu$ are two equivalent orthogonal representations of $G_\mu$, the two contour paths of $H_\mu$ have the same turn number as those of $H'_\mu$.
\end{restatable}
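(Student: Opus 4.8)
The plan is to prove Lemma~\ref{le:contour-paths} by a direct case analysis on the node type of $\mu$, leveraging the definition of equivalence together with the shape definitions given earlier in the excerpt. The essential observation is that the turn number of a contour path is \emph{determined} by the shape class (for P- and R-nodes) or by the spirality (for S-nodes), and the equivalence relation was defined precisely so that equivalent representations share this invariant. So the entire content of the lemma is to make explicit that ``same shape class'' or ``same spirality'' forces ``same turn numbers of contour paths.''

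\textbf{Case 1: $\mu$ is a P- or R-node.} Here the two contour paths $p_l$ and $p_r$ are edge-disjoint, and the shape of $H_\mu$ is defined exactly in terms of the pair $(t(p_l), t(p_r))$. By the definition of equivalence, $H_\mu$ and $H'_\mu$ are either both \D-shaped or both \X-shaped. If both are \X-shaped, then by definition $t(p_l) = t(p_r) = 1$ in each, so all four turn numbers equal $1$ and the claim is immediate. If both are \D-shaped, then in each representation the unordered pair of turn numbers is $\{0, 2\}$. I would then argue that the contour paths can be matched up consistently: orienting each contour path from one fixed pole to the other (say along the clockwise versus counterclockwise boundary walk), the path that walks clockwise in $H_\mu$ corresponds to the path that walks clockwise in $H'_\mu$, and since both representations place the reference edge $e$ on the external face with the same pole ordering, the clockwise contour path carries the same turn number in both. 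Thus $t(p_l)$ and $t(p_r)$ agree pairwise, not merely as unordered sets.

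\textbf{Case 2: $\mu$ is an inner S-node.} Here equivalence means $H_\mu$ and $H'_\mu$ have the same spirality $k$. By Lemma~\ref{le:k-spiral}, every path between the two poles in an inner S-component has the same turn number, namely $k$. Since each contour path is in particular a path between the two poles, $t(p_l) = t(p_r) = k$ in $H_\mu$ and likewise in $H'_\mu$, so all four turn numbers coincide. This case is essentially a one-line consequence of Lemma~\ref{le:k-spiral}.

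\textbf{The main obstacle} I anticipate is the bookkeeping in the \D-shaped subcase: one must be careful that the two contour paths are identified \emph{consistently} between $H_\mu$ and $H'_\mu$, so that the statement ``same turn number'' refers to corresponding paths and is not merely a statement about the multiset $\{0,2\}$. The definition of contour paths fixes $p_l$ as the clockwise walk and $p_r$ as the counterclockwise walk from pole $u$ to pole $v$; the subtlety is that a \D-shaped component admits a ``reflected'' realization in which the roles of $t(p_l)$ and $t(p_r)$ are swapped. I would resolve this by noting that the embedding of $G$ fixes which side each contour path lies on relative to the external face, so the orientation of the walk is determined by the embedding rather than chosen freely; since both representations have $e$ on the external face, the labeling of $p_l$ and $p_r$ is forced to agree, giving the pairwise equality. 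Once this orientation issue is pinned down, the remainder of the proof is a routine unpacking of definitions.
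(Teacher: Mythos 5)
Your proof takes essentially the same route as the paper: for P- and R-nodes the claim is immediate from the definition of equivalence and of the shapes, and for inner S-nodes it follows in one line from Lemma~\ref{le:k-spiral}. The one place where you go beyond the paper is the \D-shaped subcase, where you try to upgrade the conclusion from a multiset equality $\{t(p_l),t(p_r)\}=\{t(p'_l),t(p'_r)\}$ to a pairwise matching of clockwise-to-clockwise contour paths; this is not needed (the lemma, as the paper states and uses it, only requires the multiset statement), and your justification for it is actually false: equivalence does \emph{not} force the clockwise contour path to carry the same turn number in both representations, since a \D-shaped component and its mirror image are both admissible equivalent representations with the roles of $0$ and $2$ swapped. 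The paper resolves exactly this orientation issue not in Lemma~\ref{le:contour-paths} but in the definition of the \emph{replacing} operation, which explicitly permits mirroring $H'_\mu$ so that the contour path with $t(p')=t(p_l)$ is placed on the correct side. So your argument proves the lemma, but the ``main obstacle'' you identify is a non-issue for this statement and the claim you make to dispose of it should be dropped.
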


Suppose that $H_\mu$ is an inner component of $H$ with poles $u$ and $v$, and let $p_l$ and $p_r$ be the contour paths of $H_\mu$.
\emph{Replacing} $H_\mu$ in $H$ with an equivalent representation $H'_\mu$ means to insert $H'_\mu$ in $H$ in place of $H_\mu$, in such a way that: $(i)$ if $H_\mu$ and $H'_\mu$ are \D-shaped, the contour path $p'$ of $H'_\mu$ for which $t(p')=t(p_l)$ is traversed clockwise from $u$ to $v$ on the external boundary of $H'_\mu$ (as for $p_l$ on the external boundary of $H_\mu$); $(ii)$ in all cases, the external angles of $H'_\mu$ at $u$ and $v$ are the same as in $H_\mu$. This operation may require to mirror $H'_\mu$ (see Fig.~\ref{fi:replacement}). 
The next theorem uses arguments similar to~\cite{DBLP:journals/siamcomp/BattistaLV98}.

\begin{figure}[t]
	\centering
	\subfigure[]{\label{fi:substitution-a}\includegraphics[height=0.18\columnwidth]{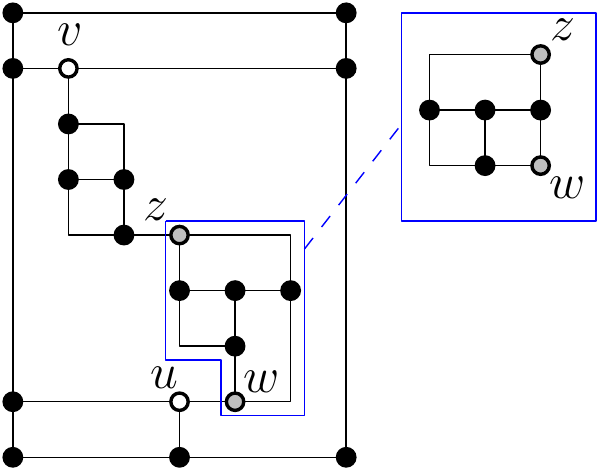}}
	\hfill
	\subfigure[]{\label{fi:substitution-b}\includegraphics[height=0.18\columnwidth]{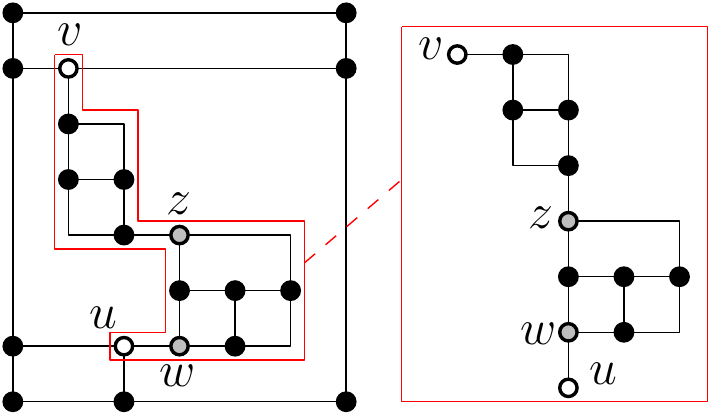}}
	\hfill
	\subfigure[]{\label{fi:substitution-c}\includegraphics[height=0.14\columnwidth]{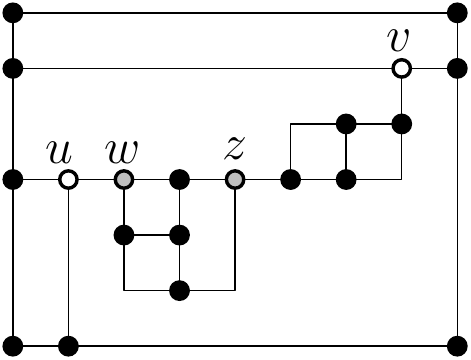}}
	\caption{(a) An orthogonal representation $H$; a D-shaped R-component with poles $\{w,z\}$ and an equivalent representation of it are in the blue frames. (b) A representation obtained from $H$ by replacing the R-component with the equivalent one; a $1$-spiral S-component with poles $\{u,v\}$ and an equivalent one are shown in the red frames. (c) The representation obtained by replacing the S-component with the equivalent one.
	}\label{fi:replacement}
\end{figure}

\begin{restatable}{theorem}{substitution}\label{th:substitution}
Let $H$ be an orthogonal representation of a planar 3-graph $G$ and $H_\mu$ be the restriction of $H$ to~$G_\mu$, where $\mu$ is an inner component of the SPQR-tree $T$ of~$G$ with respect to a reference edge~$e$. Replacing $H_\mu$ in $H$ with an equivalent representation $H'_\mu$ yields a planar orthogonal representation $H'$ of~$G$.
\end{restatable}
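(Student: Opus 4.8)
The plan is to rely on the classical combinatorial characterization of orthogonal representations~\cite{DBLP:journals/siamcomp/Tamassia87}: an assignment of vertex angles (from $\{90,180,270,360\}$ degrees) together with a left/right bend sequence on each edge of a plane graph is a valid orthogonal representation if and only if $(a)$ the angles around every vertex sum to $360$ degrees, and $(b)$ for every face $f$ the total turn along the boundary of $f$ --- counting each $90$-degree (resp. $270$-degree) vertex angle and each convex (resp. reflex) bend as $+1$ (resp. $-1$) quarter-turn --- equals $+4$ if $f$ is internal and $-4$ if $f$ is the external face. Since $H'$ is obtained from $H$ by modifying only the angles and bends inside $G_\mu$ and at the poles $u,v$, while leaving the rotation system (hence the plane embedding and the face set) unchanged, it suffices to verify $(a)$ and $(b)$ for $H'$.

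First I would localize the effect of the replacement. The edges of $G_\mu$ incident to a pole, say $u$, are consecutive in the rotation at $u$, so they span one angular sector; the complementary sector is the \emph{external angle} of $H_\mu$ at $u$, through which the edges of the host $G\setminus G_\mu$ pass. Condition $(ii)$ of the replacement operation preserves the external angles at $u$ and $v$, and $H'_\mu$ is itself a valid orthogonal representation of $G_\mu$; hence the internal angle at $u$ (the sum of the $G_\mu$-angles, equal to $360$ degrees minus the external angle) is the same in $H_\mu$ and $H'_\mu$, and condition $(a)$ at $u$ (and symmetrically at $v$) continues to hold after gluing $H'_\mu$ back to the unchanged host. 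Condition $(a)$ holds at every internal vertex of $G_\mu$ because $H'_\mu$ is valid, and at every host vertex because those angles are untouched. Likewise, condition $(b)$ holds automatically for every face lying entirely inside $G_\mu$ (validity of $H'_\mu$) and for every face lying entirely in the host (untouched). The only faces requiring attention are the (at most two) faces of $H$ straddling the boundary of $H_\mu$, namely those whose boundary contains a contour path $p_l$ or $p_r$; each such face $f$ satisfies $\mathrm{rot}(f)=\tau_p+A_u+A_v+S_{\mathrm{host}}$, where $\tau_p$ is the signed turn along the contour path $p\in\{p_l,p_r\}$ on $\partial f$, the terms $A_u,A_v$ are the turn contributions at the poles (fixed by the external angles), and $S_{\mathrm{host}}$ is the turning along the host portion of $\partial f$.

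After the swap, $A_u$, $A_v$ and $S_{\mathrm{host}}$ are unchanged by condition $(ii)$ and by the host being untouched, so preserving $\mathrm{rot}(f)=\pm 4$ reduces to showing that the signed turn along the replacing contour path equals $\tau_p$. By Lemma~\ref{le:contour-paths} the two contour paths of $H'_\mu$ have the same turn numbers as those of $H_\mu$, which already fixes the magnitudes $|\tau_{p_l}|,|\tau_{p_r}|$. The remaining task --- and the main obstacle --- is the matching of \emph{signs}. I would argue this from the turning relation around the external face of the standalone component (total turn $-4$): together with the equality of the external pole angles, it pins the pair of signed turns of $H'_\mu$ down to that of $H_\mu$ up to the global reflection that the mirroring in the replacement operation accounts for. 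For a D-shaped component the two contour paths are distinguished by their turn numbers ($0$ and $2$), and convention $(i)$ forces the path with turn number $t(p_l)$ to be traversed clockwise from $u$ to $v$ exactly as $p_l$, eliminating the ambiguity; for an X-shaped component both paths have turn number $1$ and the symmetry of the shape together with the prescribed mirroring realigns the signs; for an inner S-component the two contour paths have turn number equal to the spirality, and equality of spirality with the preserved external pole angles yields matching signs.

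In every case $\tau_{p'}=\tau_p$ for each straddling face, so condition $(b)$ is preserved there as well. Combining this with the vertex conditions established above, $H'$ satisfies both $(a)$ and $(b)$; since the rotation system and hence the embedding are unchanged, $H'$ is a planar orthogonal representation of $G$. I expect the sign-matching step to be the delicate point, because it is exactly what the orientation clause $(i)$ and the possible mirroring of $H'_\mu$ are designed to control, whereas the equality of turn numbers alone (Lemma~\ref{le:contour-paths}) is not sufficient to conclude.
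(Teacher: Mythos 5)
Your proposal is correct and follows essentially the same route as the paper's own proof: both invoke the face-turn-number characterization of orthogonal representations (the paper cites~\cite{DBLP:journals/siamcomp/VijayanW85}), observe that only the two faces straddling the boundary of $H_\mu$ need checking, and conclude from the preserved external pole angles together with Lemma~\ref{le:contour-paths} and the orientation/mirroring convention of the replacement operation. If anything, you are more explicit than the paper about why equality of the (absolute-value) turn numbers suffices to match the \emph{signed} turns along the contour paths, a point the paper's proof passes over quickly.
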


We are now ready to describe our drawing algorithm. It is based on a dynamic programming technique that visits bottom-up the SPQR-tree $T$ with respect to the reference edge $e$ of $G$. Based on Lemma~\ref{le:shapes} and Theorem~\ref{th:substitution}, the algorithm stores for each visited node $\mu$ of $T$ a \emph{set of candidate orthogonal representations} of $G_{\mu}$, together with their cost in terms of bends. For a Q-node, the set of candidate orthogonal representations consists of three representations, with 0, 1, and 2 bends, respectively. This suffices by Property~\textsf{O1}. For a P- or an R-node, the set of candidate representations consists of a bend-minimum \D-shaped and a bend-minimum \X-shaped representation. This suffices by Property~\textsf{O2}. For an S-node, the set of candidate representations consists of a bend-minimum representation for each value of spirality $0 \leq k \leq 4$. This suffices by Property~\textsf{O3}. In the following we explain how to compute the set of candidate representations for a node $\mu$ that is a P-, an S-, or an R-node (computing the set of a Q-node is trivial). To achieve overall linear-time complexity, the candidate representations stored at $\mu$ are described incrementally, linking the desired representation in the set of the children of $\mu$ for each virtual edge of $\skel(\mu)$.


\myparagraph{Candidate Representations for a P-node.} By property \textsf{T1} of Lemma~\ref{se:spqr-tree-3-graph}, $\mu$ has two children $\mu_1$ and $\mu_2$, where $\mu_1$ is an S-node and $\mu_2$ is an S-node or a Q-node. The cost of the \X-shaped representation of $\mu$ is the sum of the costs of $\mu_1$ and $\mu_2$ both with spirality one. The cost of the \D-shaped representation of $\mu$ is the minimum between the cost of $\mu_1$ with spirality two and the cost of $\mu_2$ with spirality two. This immediately implies the following.

\begin{lemma}\label{le:P-nodes}
	Let $\mu$ be an inner P-node. There exists an $O(1)$-time algorithm that computes a set of candidate orthogonal representations of $G_\mu$, each having at most two bends per edge.
\end{lemma}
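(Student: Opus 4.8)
The plan is to reduce the computation of the candidate set of $\mu$ to a constant number of look-ups in the candidate sets of its two children. First I would apply Property~\textsf{T1} of Lemma~\ref{se:spqr-tree-3-graph}: as $\mu$ is an inner P-node, it has exactly two children $\mu_1$ and $\mu_2$, with $\mu_1$ an S-node and $\mu_2$ an S- or a Q-node. Since the algorithm visits $T$ bottom-up, the candidate sets of $\mu_1$ and $\mu_2$ are already available; for an S-child this set contains one bend-minimum representation of each spirality $0\le k\le 4$, while for a Q-child it contains the representations with $0$, $1$, and $2$ bends, whose turn number along the edge is $0$, $1$, and $2$, respectively. By Property~\textsf{O2} it suffices to produce a bend-minimum \X-shaped and a bend-minimum \D-shaped representation of $G_\mu$.

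The core step is to match each shape of $H_\mu$ with a pair of child spiralities. The two contour paths $p_l,p_r$ of $H_\mu$ are exactly the outer boundaries of $H_{\mu_1}$ and $H_{\mu_2}$; since turns are charged only at internal vertices and at bends (never at the poles, which are the endpoints of both paths), Lemma~\ref{le:k-spiral} together with Lemma~\ref{le:contour-paths} gives that $t(p_l)$ and $t(p_r)$ equal the spiralities of the two children. Comparing with the definitions of the shapes, $H_\mu$ is \X-shaped exactly when both children have spirality $1$, and \D-shaped exactly when one child has spirality $2$ and the other spirality $0$. Accordingly I would set the cost of the \X-shaped candidate to $\mathrm{cost}(\mu_1,1)+\mathrm{cost}(\mu_2,1)$ and the cost of the \D-shaped candidate to $\min\{\mathrm{cost}(\mu_1,2)+\mathrm{cost}(\mu_2,0),\ \mathrm{cost}(\mu_1,0)+\mathrm{cost}(\mu_2,2)\}$, i.e.\ the cheaper of the two ways of choosing which child is the straight side and which is the bulging one (this specialises to the recurrence stated above the lemma whenever a child admits a zero-bend spirality-$0$ representation). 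In each case I would only store pointers to the two selected child representations, together with the fixed external angles at the poles and a mirroring flag, as in the replacement operation of Fig.~\ref{fi:replacement}, so that the candidate is described incrementally rather than materialised.

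It then remains to certify feasibility and the bend bound. For feasibility I would check that the prescribed spiralities of the children, combined with internal angle $90^\circ$ at each of the two degree-$3$ poles (so that the poles are external corners and the reference edge lies in the remaining $270^\circ$ on the outer face), yield a consistent set of vertex angles drawn from $\{90^\circ,180^\circ\}$ summing to $360^\circ$; Theorem~\ref{th:substitution} then guarantees that the resulting object is a genuine planar orthogonal representation of $G_\mu$. Because the combination merely places the children in parallel without adding any bend on an edge, the at-most-two-bends-per-edge property is inherited directly from the children's candidate representations. Finally, only a constant number of entries of the two children's sets are consulted and combined, so the whole construction runs in $O(1)$ time, which proves the lemma.

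I expect the main obstacle to be the shape-to-spirality correspondence of the second paragraph: one has to argue cleanly that the turn number of each contour path of the P-component coincides with the spirality of the child that realises it (no turn being charged at the poles), and that precisely the spirality pairs $(1,1)$ and $(0,2)$ produce the \X- and \D-shapes under the external-corner constraint at the poles. Once this correspondence and the attendant pole-angle check are in place, the cost formulas and the $O(1)$ bookkeeping are immediate.
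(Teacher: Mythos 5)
Your proposal follows essentially the same route as the paper: invoke Property~\textsf{T1} to get the two children, restrict to an \X-shaped and a \D-shaped candidate by Property~\textsf{O2}, realise the \X-shape by two spirality-$1$ children and the \D-shape by a spirality-$2$/spirality-$0$ pair (taking the cheaper assignment), and store pointers to the chosen child representations for $O(1)$ time. The paper states this combination rule in the paragraph preceding the lemma and declares the lemma immediate; your additional justification of the contour-path-to-spirality correspondence and the pole-angle feasibility check is consistent with, and somewhat more explicit than, the paper's argument.
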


\myparagraph{Candidate Representations for an S-node.} By property \textsf{T3} of Lemma~\ref{se:spqr-tree-3-graph}, $\skel(\mu)$ without its reference edge is a sequence of edges such that the first edge and the last edge are real (they correspond to Q-nodes) and at most one virtual edge, corresponding to either a P- or an R-node, appears between two real edges. Let $c_0$ be the sum of the costs of the cheapest (in terms of bends) orthogonal representations of all P-nodes and R-nodes corresponding to the virtual edges of $\skel(\mu)$. By Property~\textsf{O2}, each of these representations is either \D- or \X-shaped. Let $n_Q$ be the number of edges of $\skel(\mu)$ that correspond to Q-nodes and let $n_D$ be the number of edges of $\skel(\mu)$ that correspond to P- and R-nodes whose cheapest representation is \D-shaped. Obviously, any bend-minimum orthogonal representation of $G_\mu$ satisfying \textsf{O2} has cost at least $c_0$. We have the following.

\begin{restatable}{lemma}{snodespirality}\label{le:s-node-spirality}
An inner S-component admits a bend-minimum orthogonal representation respecting Properties \textsf{O1-O3} and with cost $c_0$ if its spirality $k \leq n_Q+n_D-1$ and with cost $c_0 + k - n_Q-n_D+1$ if $k > n_Q+n_D-1$.
\end{restatable}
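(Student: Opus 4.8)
The plan is to establish matching upper and lower bounds on the cost, organized around a decomposition of the spirality into independent local contributions. First I would record the structure of the chain: by Property \textsf{T3}, $\skel(\mu)$ minus its reference edge is a sequence of edges that begins and ends with a real (Q-node) edge and in which no two virtual edges are consecutive. Writing this sequence as a word over $\{R,V\}$ (with $R$ a real edge and $V$ a P- or R-child), the internal vertices shared by two real edges---call them \emph{degree-2 junctions}---are exactly the $RR$-adjacencies, and an elementary count gives their number as $n_Q-n_{PR}-1$, where $n_{PR}=n_D+n_X$ is the number of virtual children. Since, by Theorem~\ref{th:substitution} and Property \textsf{O2}, each child can be drawn independently in its cheapest (D- or X-)shape, every representation respecting \textsf{O1}--\textsf{O3} costs at least $c_0$, with equality precisely when all children use their cheapest shape and no real edge bends.

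Next I would fix an orientation and, using Lemma~\ref{le:k-spiral} to reason with a single contour path, write the signed spirality as $s=\sum_j x_j+\sum_\nu \kappa_\nu+\sum_e \beta_e$, where $x_j$ is the turn at a degree-2 junction, $\kappa_\nu$ is the combined contribution of a child $\nu$ together with the two angles at its poles, and $\beta_e$ is the signed number of bends on a real edge $e$ (contributing $|\beta_e|$ to the cost). The combinatorial heart is a case analysis of the angles at the degree-3 junctions, i.e.\ at the poles of the children, establishing: (i) a degree-2 junction realizes any $x_j\in\{-1,0,1\}$ for free; (ii) the internal angle of a child at each pole is fixed to $90^\circ$ (the poles are convex corners of the child region), so requiring both contours to have equal turn number (Lemma~\ref{le:k-spiral}) ties, per child, the two pole angles to the contour turns; and (iii) under this constraint a D-shaped child contributes any $\kappa_\nu\in\{-2,\dots,2\}$ and an X-shaped child any $\kappa_\nu\in\{-1,0,1\}$ at no additional cost. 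Summing the maximal free contributions gives $\max|s|=(n_Q-n_{PR}-1)+2n_D+n_X=n_Q+n_D-1=:M$ at cost $c_0$; and because each summand ranges over an integer interval symmetric about $0$, every spirality $k\in\{0,\dots,M\}$ is attainable at cost $c_0$.

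For $k>M$ I would realize the surplus spirality with bends: one bend on a real edge, oriented consistently with the existing turns, raises $|s|$ by one at unit cost, so $k-M$ bends yield spirality $k$ at cost $c_0+k-M$; this respects \textsf{O1} and \textsf{O3}, since $k\le4$ while the $n_Q\ge2$ real edges offer at least $2n_Q\ge k-M$ bend slots. The matching lower bound follows from the same decomposition: with cheapest children $|s|\le M+\sum_e|\beta_e|$, and redrawing any child in a costlier shape increases its bound on $|\kappa_\nu|$ by at most one per unit of extra cost, so in all cases at least $k-M$ extra units of cost are forced, giving cost $\ge c_0+k-M$. I expect the main obstacle to be fact (iii): correctly tracking, with signs, how a child's shape and the angles at its degree-3 poles combine into the two contour turn numbers while keeping those numbers equal. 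Once these per-element ranges are pinned down, the optimization and the identity $M=n_Q+n_D-1$ are routine bookkeeping.
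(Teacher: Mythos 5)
Your proof is correct and rests on the same combinatorial facts as the paper's own argument: each degree-2 junction of the chain contributes one free unit of spirality, a D-shaped child behaves like an extra such junction (net $+1$ to the free range once you account for the degree-2 junction it displaces), an X-shaped child is neutral, and each unit of spirality beyond $n_Q+n_D-1$ costs one bend on a real edge. The paper packages exactly this as a four-line induction on the number of non-Q children, whereas you unroll it into a direct signed decomposition with explicit per-element ranges ($\{-1,0,1\}$ for degree-2 junctions and X-children, $\{-2,\dots,2\}$ for D-children) plus a matching lower bound; your version is not a different argument, but it is more careful --- in particular it makes explicit the optimality of $c_0+k-n_Q-n_D+1$ and the verification that switching a child to a costlier shape never pays, points the paper's proof leaves implicit.
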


Note that the possible presence in $\skel(\mu)$ of virtual edges corresponding to P- and R-nodes whose cheapest representation is \X-shaped does not increase the spirality reachable at cost $c_0$ by the S-node.
Lemma~\ref{le:s-node-spirality} also provides an alternative proof of a known result (\cite[Lemma 5.2]{DBLP:journals/siamcomp/BattistaLV98}), stating that for a planar $3$-graph the number of bends of a bend-minimum $k$-spiral representation of an inner S-component does not decrease when $k$ increases. Moreover, since for an inner S-component $n_Q \geq 2$, a consequence of Lemma~\ref{le:s-node-spirality} is Corollary~\ref{co:k01}. It implies that every bend-minimum $k$-spiral representation of an inner S-component does not require additional bends with respect to the bend-minimum representations of their subcomponents when $k \in \{0,1\}$.

\begin{corollary}\label{co:k01}
For each $k \in \{0,1\}$, every inner S-component admits a bend-minimum orthogonal representation of cost $c_0$ with spirality $k$.	
\end{corollary}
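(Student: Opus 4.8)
The plan is to derive the statement directly from Lemma~\ref{le:s-node-spirality} by verifying that, for an inner S-component, the threshold $n_Q + n_D - 1$ is always at least~$1$, so that both $k=0$ and $k=1$ fall into the ``cheap'' regime of that lemma.

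First I would establish the combinatorial bound $n_Q \geq 2$. By Property~\textsf{T3} of Lemma~\ref{se:spqr-tree-3-graph}, the two edges of $\skel(\mu)$ incident to the poles $u$ and $v$ of an inner S-node $\mu$ are real edges of $G$, hence each corresponds to a Q-node. Since $\skel(\mu)$ is a simple cycle of length at least three, deleting the reference edge leaves a path of at least two edges whose two end-edges, incident to $u$ and to $v$ respectively, are distinct. These two distinct real edges already contribute to $n_Q$, so $n_Q \geq 2$.

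Then, since $n_D \geq 0$, I would conclude $n_Q + n_D - 1 \geq n_Q - 1 \geq 1$. For either value $k \in \{0,1\}$ we therefore have $k \leq 1 \leq n_Q + n_D - 1$, which places $k$ in the first case of Lemma~\ref{le:s-node-spirality}. That lemma then guarantees, for each such $k$, the existence of a bend-minimum orthogonal representation of $G_\mu$ respecting Properties~\textsf{O1-O3} and having cost exactly $c_0$ with spirality $k$, which is precisely the assertion of the corollary.

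I do not anticipate any real obstacle: the corollary is an immediate specialization of Lemma~\ref{le:s-node-spirality}, and the only ingredient beyond that lemma is the inequality $n_Q \geq 2$, which is forced by Property~\textsf{T3}. The one point deserving a little care is confirming that the two pole-incident real edges are genuinely distinct, so that they count separately toward $n_Q$; this is ensured because $\skel(\mu)$ has length at least three, preventing the reference edge and the two pole-incident edges from coinciding.
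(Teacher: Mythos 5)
Your proposal is correct and matches the paper's own argument: the paper derives the corollary from Lemma~\ref{le:s-node-spirality} exactly via the observation that $n_Q \geq 2$ for an inner S-component (itself a consequence of Property~\textsf{T3}), so that $k \leq 1 \leq n_Q + n_D - 1$ places both values of $k$ in the cost-$c_0$ regime. Your extra check that the two pole-incident real edges are distinct is a harmless refinement of the same reasoning.
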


\begin{restatable}{lemma}{Snodes}\label{le:S-nodes}
	Let $\mu$ be an inner S-node and $n_\mu$ be the number of vertices of $\skel(\mu)$. There exists an $O(n_\mu)$-time algorithm that computes a set of candidate orthogonal representations of $G_\mu$, each having at most two bends per edge.
\end{restatable}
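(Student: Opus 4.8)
By Property \textsf{O3}, a bend-minimum orthogonal representation of an inner S-component has spirality at most four, so the candidate set I must return consists of exactly five representations: one bend-minimum representation for each spirality $k\in\{0,1,2,3,4\}$. The plan is to assemble these five representations out of the already-computed candidate sets of the children of $\mu$, in $O(n_\mu)$ time, by describing each of them \emph{incrementally} (by pointers into the children's sets) instead of copying full representations.

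First I would make a single pass over the path obtained from $\skel(\mu)$ by deleting the reference edge. By Property \textsf{T3} this path has its first and last edges real and has no two consecutive virtual edges, and it has $O(n_\mu)$ edges. For each real edge I note a Q-node child; for each virtual edge, whose child $\nu$ is a P- or an R-node, I read in $O(1)$ time from $\nu$'s precomputed candidate set the costs of its bend-minimum \D-shaped and \X-shaped representations (the only shapes needed, by Property \textsf{O2}), and I keep the cheaper one together with a pointer to it. This pass computes $n_Q$, $n_D$, and $c_0$ in $O(n_\mu)$ time, which by Lemma~\ref{le:s-node-spirality} fixes the cost of the target spirality-$k$ representation: $c_0$ when $k\le n_Q+n_D-1$, and $c_0+k-n_Q-n_D+1$ otherwise.

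Second, for each $k$ I would instantiate a representation of this cost following the constructive argument behind Lemma~\ref{le:s-node-spirality}: the turns realizing spirality $k$ are distributed along the spine using, for free (cost $c_0$), the $90^{\circ}/270^{\circ}$ angles at junction vertices and the orientations of the \D-shaped children while $k\le n_Q+n_D-1$, and adding the remaining $k-(n_Q+n_D-1)$ turns as bends on real edges, at most two per edge by Property \textsf{O1}, when $k$ exceeds the free budget. No child representation is copied: each candidate of $\mu$ is stored as the per-edge list of choices (a pointer plus an orientation flag for a virtual edge, a bend count for a real edge) together with the junction angles, so it has size $O(n_\mu)$ and is produced by a single $O(n_\mu)$ scan. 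Since there are five candidates and every child representation already satisfies \textsf{O1} by induction, the whole set is computed in $O(n_\mu)$ time and each of its representations has at most two bends per edge.

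The existence and the cost of each spirality-$k$ representation are handed to me by Lemma~\ref{le:s-node-spirality}, so the main obstacle is the algorithmic one: turning that existence statement into a constructive, pointer-based encoding that is read off the children in $O(1)$ each and never descends into their subtrees, thereby keeping the per-node cost proportional to $n_\mu$. The only substantive verification is that the turn distribution can always be realized within the two-bends-per-edge limit for all $k\le 4$, and that \X-shaped children --- which contribute a fixed turn and, as noted after Lemma~\ref{le:s-node-spirality}, do not enlarge the free budget $n_Q+n_D-1$ --- are incorporated consistently; as $k$ ranges over a constant-size set, this is a short finite case analysis that leaves the $O(n_\mu)$ bound intact.
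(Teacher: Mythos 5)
Your proposal is correct and follows essentially the same route as the paper: compute $c_0$, $n_Q$, and $n_D$ in one pass over $\skel(\mu)$, invoke Lemma~\ref{le:s-node-spirality} for the cost of each spirality $k\le 4$, and distribute the at most three extra bends over the $n_Q\ge 2$ real edges so that no edge gets more than two, with the candidates stored by pointers into the children's sets to stay within $O(n_\mu)$ time. The only cosmetic slip is attributing the two-bends-per-edge bound to Property~\textsf{O1} mid-argument (it is the conclusion, not the reason --- the reason is the even distribution of $\le 3$ bends over $\ge 2$ real edges), which you in fact resolve in your closing paragraph.
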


\myparagraph{Candidate Representations for an R-node.} If $\mu$ is an R-node, its children are S- or Q-nodes (Property \textsf{T2} of Lemma~\ref{se:spqr-tree-3-graph}). To compute a bend-minimum orthogonal representation of $G_\mu$ that satisfies Properties~\textsf{O1-O3}, we devise a variant of the linear-time algorithm by Rahman, Nakano, and Nishizeki~\cite{DBLP:journals/jgaa/RahmanNN99} that exploits the properties of inner S-components.

\begin{restatable}{lemma}{Rnodes}\label{le:R-nodes}
 Let $\mu$ be an inner R-node and $n_\mu$ be the number of vertices of $\skel(\mu)$. There exists an $O(n_\mu)$-time algorithm that computes a set of candidate orthogonal representations of $G_\mu$, each having at most two bends per edge.
\end{restatable}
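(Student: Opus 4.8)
The plan is to reduce the computation to a suitably modified version of the bend-minimization algorithm for triconnected cubic plane graphs by Rahman, Nakano, and Nishizeki~\cite{DBLP:journals/jgaa/RahmanNN99}. First observe that, since $G$ is a $3$-graph and $\skel(\mu)$ is triconnected (hence of minimum degree three), $\skel(\mu)$ is a \emph{cubic} plane graph whose embedding is unique up to a reflection; moreover, in $G_\mu$ the two poles $u,v$ have degree two, so that $C_o(G_\mu)$ is a $2$-legged cycle of $G$ whose leg-vertices are exactly $u$ and $v$. By Property~\textsf{T2} of Lemma~\ref{se:spqr-tree-3-graph}, every virtual edge of $\skel(\mu)$ corresponds to a child S-node, for which a bend-minimum candidate representation of each spirality $0 \le k \le 4$ has already been computed together with its cost in terms of bends. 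The candidate set we must produce consists of a bend-minimum \D-shaped and a bend-minimum \X-shaped representation of $G_\mu$, each with at most two bends per edge.

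The main step is the adaptation of the algorithm of~\cite{DBLP:journals/jgaa/RahmanNN99}. We fix one of the two reflections of $\skel(\mu)$ and impose the desired shape by prescribing the external angles at the poles and the corner pattern along $C_o(G_\mu)$, so that the two contour paths attain the turn numbers required by the \D-shape ($0$ and $2$) or by the \X-shape ($1$ and $1$). As in the proof of Lemma~\ref{le:shapes}, a representation of $G_\mu$ of a prescribed shape corresponds to a no-bend representation of its rectilinear image, i.e., to a drawing of a subdivision of $G_\mu$ that satisfies the three conditions of Theorem~\ref{th:RN03}; the algorithm of~\cite{DBLP:journals/jgaa/RahmanNN99} places the minimum number of bend vertices needed to meet these conditions. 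The two differences with the original algorithm are: $(i)$ the outer-face condition is replaced by the shape constraint above; and $(ii)$ each virtual edge is treated as a \emph{flexible} edge rather than a rigid one.

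To model flexibility we exploit the candidate sets of the child S-nodes. By Corollary~\ref{co:k01} every inner S-component realizes spiralities $0$ and $1$ at its minimum cost $c_0$, and by Lemma~\ref{le:s-node-spirality} it realizes larger spiralities at a cost that grows by one bend per unit of spirality beyond its free budget $n_Q+n_D-1$. Hence each virtual edge contributes a fixed base cost together with a range of turns that the algorithm may route through it for free up to the budget and at unit cost thereafter; we then plug in the precomputed representation of the spirality that the bend-placement selects. Because the edges incident to the poles of an inner S-node are real and no two virtual edges of an S-skeleton are incident (Property~\textsf{T3} of Lemma~\ref{se:spqr-tree-3-graph}), this substitution is local and does not interfere with the corner-assignment logic of~\cite{DBLP:journals/jgaa/RahmanNN99}. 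Minimizing over the two reflections of $\skel(\mu)$ (which are exactly the two embedding choices the R-node offers) and over the constantly many ways of realizing each shape yields the bend-minimum \D-shaped and \X-shaped candidates. Property~\textsf{O1} (at most two bends per edge) follows from the same smoothing argument used in Step~1 of the proof of Lemma~\ref{le:shapes}, while the child spiralities never exceed four by Property~\textsf{O3}.

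For the running time, the algorithm of~\cite{DBLP:journals/jgaa/RahmanNN99} runs in time linear in the size of $\skel(\mu)$, the substitution of virtual edges and the set-up of the shape constraint add only linear overhead, and the reflections and shape orientations contribute a constant factor; this gives the claimed $O(n_\mu)$ bound. I expect the main obstacle to be twist $(ii)$: proving that the free spirality budget of the child S-components can be incorporated into the bend-placement of~\cite{DBLP:journals/jgaa/RahmanNN99} without breaking either its linear-time corner assignment or its optimality guarantee, and that the spiralities it assigns to the children remain within the bound required by~\textsf{O3}. Reconciling the flexible-edge cost model with the original bad-cycle analysis underlying Theorem~\ref{th:RN03}, so that bend-optimality is preserved within each fixed shape class, is the delicate part of the argument.
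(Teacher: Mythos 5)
Your proposal follows the same route as the paper's proof: run a variant of the Rahman--Nakano--Nishizeki algorithm \RNN on $\skel(\mu)$ minus the reference edge, exploit the fact that every virtual edge corresponds to a child S-node whose candidate set realizes spiralities $0$ and $1$ for free (Corollary~\ref{co:k01}) and higher spiralities at unit cost beyond the budget (Lemma~\ref{le:s-node-spirality}), and finally plug into each virtual edge the $k$-spiral representation matching the number of bends assigned to it. The outline, the supporting lemmas, and the linear-time accounting all coincide with the paper's.

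However, you explicitly leave unresolved the one step that carries the technical content of the lemma: how the ``free'' virtual edges are integrated into \RNN's placement of designated degree-$2$ vertices without losing bend-optimality within each shape class. The paper does not treat this as an open reconciliation with the bad-cycle analysis; it supplies a concrete combinatorial rule. For the \X-shaped candidate, after fixing $u$ and $v$ as two designated corners, each edge of the two contour paths $p_l$ and $p_r$ is classified as free-\&-useful, free-\&-useless, costly-\&-useful, or costly-\&-useless (free $=$ virtual, useful $=$ on a corner cycle, with the caveat that choosing the useful edge of $p_l$ incident to a pole renders the corresponding edge of $p_r$ useless and vice versa), and the two remaining designated corners are chosen so as to minimize the number of useless plus costly edges; in the recursion, each designated vertex breaking a bad cycle is placed according to the priority: virtual green, virtual red, real green, any other real edge. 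For the \D-shaped candidate the two extra corners must lie on the same contour path, which additionally requires the notion of a \emph{double-free} virtual edge (one whose S-component reaches spirality two at cost $c_0$) that may absorb both corners. Your ``base cost plus unit cost beyond the budget'' model is the right intuition, but without these explicit selection rules the optimality of the computed \D- and \X-shaped representations --- and hence the correctness of the candidate set --- is not established. (Two minor points: the two-bends-per-edge bound comes from the behavior of \RNN itself, not from the smoothing argument of Lemma~\ref{le:shapes}; and minimizing over the two reflections of an inner R-node is redundant, since they yield mirror-image representations of identical cost.)
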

\begin{proof}[sketch]
Let $\{u,v\}$ be the poles of $\mu$. Our algorithm works in two steps. First, it computes an \X-shaped orthogonal representation $\tilde{H}^{\x}_\mu$ and a \D-shaped orthogonal representation $\tilde{H}^{\d}_\mu$ of $\tilde{G}_\mu = \skel(\mu) \setminus (u,v)$, with a variant of the recursive algorithm in~\cite{DBLP:journals/jgaa/RahmanNN99}. Then, it computes a bend-minimum \X-shaped representation $H^{\x}_\mu$ and a bend-minimum \D-shaped representation $H^{\d}_\mu$ of $G_\mu$, by replacing each virtual edge $e_S$ in each of $\tilde{H}^{\x}_\mu$ and $\tilde{H}^{\d}_\mu$ with the representation in the set of the corresponding S-node whose spirality equals the number of bends of $e_S$. Every time the algorithm needs to insert a degree-$2$ vertex along an edge of a bad cycle, it adds this vertex on a virtual edge, if such an edge exists. By Corollary~\ref{co:k01}, this vertex does not cause an additional bend in the final representation when the virtual edge is replaced by the corresponding S-component.
\end{proof}

%
%
\subsection{Handling the Root Child Component}\label{sse:root-child}

Let $T$ be the SPQR-tree of $G$ with respect to edge $e=(u,v)$ and let $\mu$ be the root child of~$T$.
Assuming to have already computed the set of candidate representations for the children of~$\mu$, we compute an orthogonal representation $H_\mu$ of $G_\mu$ and a bend-minimum orthogonal representation $H$ of $G$ (with $e$ on the external face) depending on the type of $\mu$.


\myparagraph{Algorithm \textsf{P-root-child}.} Let $\mu$ be a P-node with children $\mu_1$ and $\mu_2$. By Property~\textsf{T1} of Lemma~\ref{se:spqr-tree-3-graph}, both $\mu_1$ and $\mu_2$ are S-nodes. Let $k_1$ ($k_2$) be the maximum spirality of a representation $H_{\mu_1}$ ($H_{\mu_2}$) at the same cost $c_{0,1}$ ($c_{0,2}$) as a $0$-spiral representation. W.l.o.g., let $k_1 \geq k_2$. We have three cases:

\smallskip
\noindent \textsf{Case 1}: $k_1 \geq 4$. Compute a \C-shaped $H_\mu$ by merging a $4$-spiral and a $2$-spiral representation of $\mu_1$ and $\mu_2$, respectively; add $e$ with $0$ bends to get~$H$ (Fig.~\ref{fi:root-child-p-node-a}).

\noindent \textsf{Case 2}: $k_1 = 3$. Compute an \L-shaped $H_\mu$ by merging a $3$-spiral and a $1$-spiral representation of $\mu_1$ and $\mu_2$, respectively; add $e$ with $1$ bend to get~$H$ (Fig.~\ref{fi:root-child-p-node-b}).

\noindent \textsf{Case 3}: $k_1 = 2$ or $k_2 = k_1 = 1$. Compute a \D-shaped $H_\mu$ by merging a $2$-spiral and a $0$-spiral representation of $\mu_1$ and $\mu_2$, respectively; add $e$ with $2$ bends to get~$H$  (Figs.~\ref{fi:root-child-p-node-c}-~\ref{fi:root-child-p-node-d}).


\begin{restatable}{lemma}{Prootchild}\label{le:P-root-child}
\textsf{P-root-child} computes a bend-minimum orthogonal representation of $G$ with $e$ on the external face and at most two bends per edge in $O(1)$~time.
\end{restatable}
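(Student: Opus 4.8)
The plan is to reduce the statement to a constant-size optimization over two integer parameters. Since $\mu$ is the root child P-node, by Property~\textsf{T1} of Lemma~\ref{se:spqr-tree-3-graph} its skeleton consists of $e$ together with the virtual edges of two S-nodes $\mu_1,\mu_2$, so $G$ is a ``theta'' of three internally disjoint $u$--$v$ paths: the single edge $e$ and the two S-components $G_{\mu_1},G_{\mu_2}$. By Lemma~\ref{le:shapes} I may restrict attention to representations satisfying \textsf{O1}--\textsf{O3}; hence $e$ carries $b_e\in\{0,1,2\}$ bends (monotone, as $H$ is bend-minimum) and each $G_{\mu_i}$ has a spirality $s_i\le 4$. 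By Lemma~\ref{le:s-node-spirality} the minimum number of bends of $G_{\mu_i}$ at spirality $s$ is $c_{0,i}+\max(0,s-k_i)$, where $k_i$ is exactly the quantity defined by the algorithm. Thus the total cost of any candidate equals $c_{0,1}+c_{0,2}+\max(0,s_1-k_1)+\max(0,s_2-k_2)+b_e$, and everything reduces to minimizing the last three terms over the \emph{feasible} triples $(s_1,s_2,b_e)$.

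The core of the proof is to characterize feasibility by a rotation (angle) analysis. First I record the elementary fact that in an orthogonal representation of a $3$-graph a degree-$3$ vertex has angles $(90,90,180)$; in particular at each pole exactly one of the three incident faces forms a flat ($180^\circ$) angle. With $e$ on the external face, order the three $u$--$v$ paths from outside inward as $e$, the ``middle'' S-path, and the ``extreme'' S-path; the theta has two internal faces and the external face. Writing the Tamassia angle-sum condition ($+4$ for internal, $-4$ for external faces) and using signed turns $\rho(e)=\pm b_e$ and $\rho=\pm s_i$ along the two S-paths, the two internal-face equations read $\rho(e)-\rho(\text{middle})=A$ and $\rho(\text{middle})-\rho(\text{extreme})=B$, where $A,B\in\{2,3,4\}$ are determined by the pole angles. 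Since $A,B\ge 2$ and $|\rho(e)|\le 2$, both S-turns are non-positive, which yields the two necessary inequalities $s_{\max}-s_{\min}=B\ge 2$ and $s_{\min}+s_{\max}+2b_e\ge 2A+B\ge 6$ (where $s_{\min}\le s_{\max}$ are the spiralities). Conversely, choosing $A=B=2$, the flat angle on the external face, and $\rho(e)=b_e$ makes the three triples $(s_{\min},s_{\max},b_e)\in\{(2,4,0),(1,3,1),(0,2,2)\}$ realizable, producing a C-, an L-, and a D-shaped $H_\mu$ respectively. Crucially, the condition $s_{\max}-s_{\min}\ge 2$ forbids the X-shape $(1,1)$ for the root child, which is why no fourth case is needed.

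Given this characterization, optimality of the three cases is a finite check. I place the larger prescribed spirality on $\mu_1$ (the larger free capacity, $k_1\ge k_2$), so the extra cost of a triple is $\max(0,s_{\max}-k_1)+\max(0,s_{\min}-k_2)+b_e$. Because $\max(0,s-k)$ is non-decreasing in $s$ and grows by at most one per unit, the two necessary inequalities together with $s_i\le 4$ force every feasible triple to cost at least as much as one of the three Pareto triples above, so the minimization reduces to comparing those three. Using $k_2\ge 1$ for an inner S-component (as $n_Q\ge 2$, cf.\ Corollary~\ref{co:k01}), one gets: for $k_1\ge 4$ the C-shape costs $\max(0,2-k_2)\le 1$ and is minimum; for $k_1=3$ the L-shape costs $1$ and is minimum; for $k_1=2$, resp.\ $k_1=k_2=1$, the D-shape costs $2$, resp.\ $3$, and is minimum. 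The matching lower bound follows because a bend-minimum representation of $G$ with $e$ external may be taken to satisfy \textsf{O1}--\textsf{O3} (Lemma~\ref{le:shapes}), hence restricts to a feasible triple with $s_i\le 4$ and $b_e\le 2$, and therefore has cost at least that of the chosen case.

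Finally, realizability and running time. Each chosen $H_\mu$ is obtained by merging the stored candidate of $\mu_1$ at the prescribed spirality with that of $\mu_2$, setting the pole angles dictated by the analysis, and attaching $e$ with $b_e$ bends; its validity is the content of the face equations already verified (equivalently, its rectilinear image satisfies Theorem~\ref{th:RN03}). Since the candidate sets of $\mu_1,\mu_2$ are precomputed and described incrementally, reading $k_1,k_2$, selecting and linking the two needed representations, running the constant-size case analysis, and adding $e$ all take $O(1)$ time. The S-component candidates and $e$ each have at most two bends per edge, so $H$ does too. The main obstacle is the feasibility characterization of the second paragraph, and in particular deriving $s_{\max}-s_{\min}\ge 2$, which both excludes the X-shape and drives the entire case distinction; the remaining arguments are bookkeeping.
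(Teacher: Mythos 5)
Your proof is correct and follows essentially the same route as the paper's: the paper's ``cycle-property'' ($s_{\mu_1}+b_e\ge 4$) is exactly the outer-cycle instance of your face equations, and your three realizable triples $(2,4,0)$, $(1,3,1)$, $(0,2,2)$ are precisely the paper's Cases 1--3. Your write-up is somewhat more systematic, in that it makes explicit the second constraint $s_{\max}-s_{\min}\ge 2$ (which the paper uses only implicitly, e.g.\ when arguing that lowering the spirality of $\mu_2$ forces a bend on $e$) and replaces the paper's per-case reasoning with a uniform finite optimization.
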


\myparagraph{Algorithm \textsf{S-root-child}.} Let $\mu$ be an S-node. if $G_\mu$ starts and ends with one edge, we compute the candidate orthogonal representations of $G_\mu$ as if it were an inner S-node, and we obtain $H$ by adding $e$ with zero bends to the $2$-spiral representation of $G_\mu$ (Fig.~\ref{fi:root-child-s-node-a}). Else, if $G_\mu$ only starts or ends with one edge, we add $e$ to the other end of $G_\mu$, compute the candidate representations of $G_\mu \cup \{e\}$ as if it were an inner S-node, and obtain $G$ by adopting the representation of $G_\mu \cup \{e\}$ with spirality $3$ and by identifying the first and last vertex (Fig.~\ref{fi:root-child-s-node-b}). Finally, if $\skel(\mu) \setminus \{e\}$ starts and ends with an R- or a P-node, we add two copies $e'$, $e''$ of $e$ at the beginning and at the end of $G_\mu$, compute the candidate representations of $G_\mu \cup \{e',e''\}$ as if it were an inner S-node, and obtain $H$ from the representation of $G_\mu \cup \{e',e''\}$ with spirality $4$, by identifying the first and last vertex of $G_\mu \cup \{e',e''\}$ and by smoothing the resulting vertex (Fig.~\ref{fi:root-child-s-node-c}).

\begin{restatable}{lemma}{Srootchild}\label{le:S-root-child}
\textsf{S-root-child} computes a bend-minimum orthogonal representation of $G$ with $e$ on the external face and at most two bends per edge in $O(n_\mu)$~time, where $n_\mu$ is the number of vertices of $\skel(\mu)$.
\end{restatable}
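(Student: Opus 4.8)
The plan is to prove Lemma~\ref{le:S-root-child} (\textsf{S-root-child}) by reducing each of its three cases to the already-established machinery for inner S-nodes, and then verifying correctness and running time separately. The key conceptual device is the following reduction: rather than analyzing the root-child S-component directly (which sits on the external face and is therefore governed by Condition~$(i)$ of Theorem~\ref{th:RN03}, not by the $2$-legged-cycle conditions), we augment $G_\mu$ with one or two copies of the reference edge $e$ so that the augmented component behaves like an \emph{inner} S-component, to which Lemma~\ref{le:s-node-spirality} and Corollary~\ref{co:k01} apply verbatim. The spirality value we pick in each case ($2$, $3$, or $4$) is exactly the external turn number needed to close the outer boundary into a rectilinear cycle once $e$ is placed on the external face.

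First I would establish correctness of the shape produced in each case. The crucial arithmetic is that the external face of $H$ must be a rectilinear cycle, i.e. the total turn around the outer boundary is $4$ (going, say, clockwise). In \textsf{Case}~1, since $G_\mu$ both starts and ends with a real edge, the two poles of $\mu$ can serve as corners and we need the S-component to realize spirality $2$; adding $e$ with $0$ bends supplies the remaining two convex corners at $u$ and $v$, so the outer boundary closes correctly. In \textsf{Case}~2, when only one end of $G_\mu$ is a real edge, appending $e$ to the other end and asking for spirality~$3$ in the augmented chain $G_\mu \cup \{e\}$ yields, after identifying the first and last vertex, exactly the turn count needed; the identification is legitimate because the augmented component is a chain whose two endpoints coincide with the single external vertex of $G$. \textsf{Case}~3 is the symmetric situation where both ends are P- or R-nodes: adding two copies $e',e''$ and using spirality~$4$, then identifying endpoints and smoothing the degree-$2$ vertex created by the coincidence, restores a valid orthogonal representation with $e$ drawn as a single edge on the external face.

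Next I would argue \emph{bend-minimality}. By Lemma~\ref{le:shapes} there is a bend-minimum representation of $G$ with $e$ on the external face satisfying \textsf{O1}--\textsf{O3}; restricting it to $G_\mu$ gives an S-component whose spirality, together with the bends on $e$, achieves the global optimum. The reduction is tight because the candidate-set computation for the augmented inner S-component (via Lemma~\ref{le:s-node-spirality}) already stores a bend-minimum representation for \emph{every} spirality value in $\{0,\dots,4\}$; selecting the specific spirality dictated by each case therefore inherits optimality, and the added copies of $e$ contribute no extra bends (they are placed on the external boundary where the spirality budget of the inner-S analysis allows up to four free turns). I would also check that smoothing in \textsf{Case}~3 preserves the number of bends, which follows because smoothing only removes a degree-$2$ vertex without altering any turn on incident edges.

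For the running time, the set of candidate representations of the augmented S-component is computed by the inner S-node algorithm of Lemma~\ref{le:S-nodes}, which runs in $O(n_\mu)$ time; adding at most two edges, performing the vertex identification, and the single smoothing are all $O(1)$ operations, so the total is $O(n_\mu)$. The main obstacle I anticipate is \textbf{not} the running time but the case analysis establishing that the augmented component is a \emph{legitimate} inner S-component to which the earlier lemmas apply — in particular, that adding copies of $e$ does not create forbidden configurations (two incident virtual edges, violating Property~\textsf{T3}) and that the endpoint identification yields a representation that is genuinely an orthogonal representation of $G$ rather than of some multigraph or non-simple structure. Verifying these embedding-level details carefully, and confirming that the chosen spiralities $2,3,4$ are simultaneously achievable and necessary, is where the real work lies.
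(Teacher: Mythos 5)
Your overall strategy coincides with the paper's: treat the augmented chain ($G_\mu$, $G_\mu\cup\{e\}$, or $G_\mu\cup\{e',e''\}$) as an inner S-node, take the candidate representation with spirality $2$, $3$, or $4$ from the machinery of Lemmas~\ref{le:s-node-spirality} and~\ref{le:S-nodes}, and close the external face; the $O(n_\mu)$ bound is argued the same way. Two points need repair, however. First, your parenthetical claim that ``the added copies of $e$ contribute no extra bends'' is false in general: by Lemma~\ref{le:s-node-spirality} the spirality-$3$ (resp.\ spirality-$4$) representation costs $c_0+3-n_Q-n_D+1$ (resp.\ $c_0+4-n_Q-n_D+1$), and these extra bends land on real edges, possibly on the copies of $e$ themselves. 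The paper's actual argument for the two-bend bound is a counting one: in \textsf{Case}~2 the augmented series has $n_Q\geq 2$, so at most two extra bends are spread over at least two Q-nodes; in \textsf{Case}~3 it has $n_Q\geq 3$, so again no single edge --- including the edge obtained by merging $e'$ and $e''$ after the smoothing --- receives more than two bends. Your proposal never performs this count, and without it Property~\textsf{O1} is not established. Second, your optimality argument only shows the output is bend-minimum \emph{among representations realizing the chosen spirality}; you still must exclude configurations that trade spirality of $G_\mu$ against bends on $e$ (e.g., spirality $1$ for $G_\mu$ plus one bend on $e$ in \textsf{Case}~1). The paper closes this with the observation, immediate from Lemma~\ref{le:s-node-spirality}, that one extra unit of spirality costs at most one bend, so shifting a bend from the series onto $e$ can never strictly improve the count; you explicitly defer this step as ``the real work,'' so it remains a gap in the proposal as written.
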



\myparagraph{Algorithm \textsf{R-root-child}.} Let $\mu$ be an R-node and let $\phi_1$ and $\phi_2$ be the two planar embeddings of $\skel(\mu)$ obtained by choosing as external face one of those incident to~$e$. For each $\phi_i$, compute an orthogonal representation $H_i$ of $G$ by: $(i)$ finding a representation $\tilde{H}_i$ of $\skel(\mu)$ (included $e$) with the variant of~\cite{DBLP:journals/jgaa/RahmanNN99} given in the proof of Lemma~\ref{le:R-nodes}, but this time assuming that all the four designated corners of the external face in the initial step must be found; $(ii)$ replacing each virtual edge that bends $k \geq 0$ times in $\tilde{H}_i$ with a minimum-bend $k$-spiral representation of its corresponding S-component. $H$ is the cheapest of $H_1$ and $H_2$. Since the variant of~\cite{DBLP:journals/jgaa/RahmanNN99} applied to $\skel(\mu)$ still causes at most two bends per edge, with the same arguments as in Lemma~\ref{le:R-nodes} we have:

\begin{lemma}\label{le:R-root-child}
\textsf{R-root-child} computes a bend-minimum orthogonal representation of $G$ with $e$ on the external face and at most two bends per edge in $O(n_\mu)$~time, where $n_\mu$ be the number of vertices of $\skel(\mu)$.
\end{lemma}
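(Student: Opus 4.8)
The plan is to prove the three assertions of the lemma---that \textsf{R-root-child} outputs a bend-minimum orthogonal representation with $e$ on the external face, that it has at most two bends per edge, and that it runs in $O(n_\mu)$ time---by reusing, as much as possible, the analysis already carried out for inner R-nodes in Lemma~\ref{le:R-nodes}, and then isolating the two genuinely new ingredients of the root-child case: the enumeration of both faces incident to $e$, and the requirement that all four corners of the external face be designated. First I would justify that trying only $\phi_1$ and $\phi_2$ suffices. Since $\mu$ is the root child and an R-node, $\skel(\mu)$ is triconnected and hence has a unique combinatorial embedding up to reflection; the only freedom at $\mu$ that affects the external boundary is the choice of which of the two faces incident to $e$ becomes the external one, and these two choices are exactly $\phi_1$ and $\phi_2$ (reflections merely produce mirror representations of equal cost). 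All remaining embedding choices reside inside the descendant components and are already optimized in the candidate sets computed by Lemmas~\ref{le:P-nodes},~\ref{le:S-nodes}, and~\ref{le:R-nodes}. Thus taking $H$ to be the cheaper of $H_1$ and $H_2$ ranges, at the granularity of this node, over all embeddings of $G$ with $e$ external.

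Next, for each $\phi_i$ I would run the variant of the Rahman--Nakano--Nishizeki no-bend algorithm of Lemma~\ref{le:R-nodes} on $\skel(\mu)$ (including $e$), but now forcing all four corners of the external face to be designated in the initial step; this is precisely Condition~$(i)$ of Theorem~\ref{th:RN03} applied to the whole graph with $e$ external, and it is the only place where the root-child variant departs from the inner variant (which instead fixed the poles as the corners of an \X- or \D-shape). Exactly as in the inner case, whenever the algorithm must insert a degree-$2$ vertex to destroy a bad cycle it places that vertex on a virtual edge whenever one is available; by Corollary~\ref{co:k01} this raises the spirality of the corresponding S-component from $0$ to $1$ at no extra cost, so the resulting $\tilde H_i$ still has at most two bends per edge on $\skel(\mu)$. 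I would then replace each virtual edge that bends $k$ times by a bend-minimum $k$-spiral representation of its S-component, which exists with spirality at most four by Property~\textsf{O3} of Lemma~\ref{le:shapes}; by Theorem~\ref{th:substitution} each such substitution yields a planar orthogonal representation $H_i$ of $G$, and since every candidate S-representation has at most two bends per edge, so does $H_i$.

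For bend-minimality I would invoke Lemma~\ref{le:shapes}: a bend-minimum representation of $G$ with $e$ external satisfying Properties~\textsf{O1-O3} exists, so it suffices that the candidate sets capture every shape it may use (\X- or \D-shaped P/R-components by Property~\textsf{O2}, spirality at most four for S-components by Property~\textsf{O3}) and that the RNN variant is bend-minimum on $\skel(\mu)$ under the four-corner constraint; minimizing over $\phi_1$ and $\phi_2$ then delivers the global optimum among embeddings with $e$ external. The running time follows as in Lemma~\ref{le:R-nodes}: the RNN variant is linear in $|\skel(\mu)|$, the virtual-edge substitutions are done in $O(n_\mu)$ by linking into the precomputed candidate sets, and executing the whole procedure twice (once per $\phi_i$) keeps the total at $O(n_\mu)$.

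I expect the main obstacle to be the bend-minimality argument rather than the construction or the timing. Concretely, one must show that forcing all four external corners on $C_o(G)$ and charging each bad-cycle-breaking bend to a virtual edge (to exploit the cost-free spirality of Corollary~\ref{co:k01}) actually attains the lower bound implied by Lemma~\ref{le:shapes}, not merely \emph{some} feasible drawing. This is exactly the point where the inner-node analysis of Lemma~\ref{le:R-nodes} must be re-examined in the root setting, verifying both that the corner budget on the external face is spent optimally and that no bend-minimum embedding with $e$ external is lost by restricting the face choice to $\phi_1$ and $\phi_2$.
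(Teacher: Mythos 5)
Your proposal follows essentially the same route as the paper, which gives no standalone proof of this lemma and simply defers to the arguments of Lemma~\ref{le:R-nodes}, augmented by exactly the two ingredients you isolate: enumerating the two embeddings $\phi_1,\phi_2$ of the triconnected skeleton with $e$ external, and requiring all four designated corners of the external face in the initial step of the \RNN variant. Your elaboration (Corollary~\ref{co:k01} for cost-free placement of degree-2 vertices on virtual edges, Theorem~\ref{th:substitution} for the substitutions, Properties~\textsf{O1}--\textsf{O3} for completeness of the candidate sets) is consistent with the paper's intended argument, and the point you flag as delicate is indeed the part the paper leaves implicit.
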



%
%

\noindent{\bf Proof of Theorem~\ref{th:main}.}
If $G$ is biconnected, Lemmas~\ref{le:P-nodes},~\ref{le:S-nodes},~\ref{le:R-nodes},~\ref{le:P-root-child}$-$\ref{le:R-root-child} yield an $O(n)$-time algorithm that computes a bend-minimum orthogonal representation of $G$ with a distinguished edges $e$ on the external face and at most two bends per edge. Call \textsf{BendMin-RefEdge} this algorithm. An extension of \textsf{BendMin-RefEdge} to a simply-connected graph $G$, which still runs in $O(n)$ time, is easily derivable by exploiting the block-cut-vertex tree of $G$ (see Appendix~\ref{se:app-drawing-alg}). Running \textsf{BendMin-RefEdge} for every possible reference edge, we find in $O(n^2)$ time a bend-minimum orthogonal representation of $G$ over all its planar embeddings. If $v$ is a distinguished vertex of $G$, running \textsf{BendMin-RefEdge} for every edge incident to $v$, we find in $O(n)$ time a bend-minimum orthogonal representation of $G$ with $v$ on the external face (recall that $\deg(v) \leq 3$).
Finally, an orthogonal drawing of $G$ is computed in $O(n)$ time from an orthogonal representation of~$G$~\cite{DBLP:books/ph/BattistaETT99}.
\section{Open Problems}\label{se:conclusions}

We suggest two research directions related to our results: (i) Is there an $O(n)$-time algorithm to compute a bend-minimum orthogonal drawing of a 3-connected planar cubic graph, for every possible choice of the external face? (ii) It is still unknown whether an $O(n)$-time algorithm for the bend-minimization problem in the fixed embedding setting exists~\cite{elt-gd-17}. This problem could be tackled with non-flow based approaches. A positive result in this direction is given in~\cite{DBLP:conf/wg/RahmanN02} for plane 3-graphs.


%

\newpage

\bibliography{bibliography}
\bibliographystyle{splncs04}

\clearpage
\appendix

\makeatletter
\noindent
\rlap{\color[rgb]{0.51,0.50,0.52}\vrule\@width\textwidth\@height1\p@}%
\hspace*{7mm}\fboxsep1.5mm\colorbox[rgb]{1,1,1}{\raisebox{-0.4ex}{%
		\large\selectfont\sffamily\bfseries Appendix}}%
\makeatother

\section{Additional Material for Section~\ref{se:preliminaries}}\label{se:app-prel}
 $G$ is \emph{$1$-connected}, or \emph{simply-connected}, if there is a path between any two vertices. $G$ is \emph{$k$-connected}, for $k \geq 2$, if the removal of $k-1$ vertices leaves the graph $1$-connected. A $2$-connected ($3$-connected) graph is also called \emph{biconnected} (\emph{triconnected}).

A \emph{planar drawing} of $G$ is a geometric representation in the plane such that: $(i)$ each vertex $v \in V(G)$ is drawn as a distinct point $p_v$; $(ii)$ each edge $e=(u,v) \in E(G)$ is drawn as a simple curve connecting $p_u$ and $p_v$; $(iii)$ no two edges intersect in $\Gamma$ except at their common end-vertices (if they are adjacent). A graph is \emph{planar} if it admits a planar drawing. A planar drawing $\Gamma$ of $G$ divides the plane into topologically connected regions, called \emph{faces}. The \emph{external face} of $\Gamma$ is the region of unbounded size; the other faces are \emph{internal}. A \emph{planar embedding} of $G$ is an equivalence class of planar drawings that define the same set of (internal and external) faces, and it can be described by the clockwise sequence of vertices and edges on the boundary of each face plus the choice of the external face. Graph $G$ together a given planar embedding is an \emph{embedded planar graph}, or simply a \emph{plane graph}: If $\Gamma$ is a planar drawing of $G$ whose set of faces is that described by the planar embedding of $G$, we say that $\Gamma$ \emph{preserves} this embedding, or also that $\Gamma$ is an \emph{embedding-preserving drawing} of $G$. 

\spqrtreethreegraph*
\begin{proof}
	We prove the four properties separately.	
	\begin{itemize}
		\item{Proof of \textsf{T1}}. By definition, a P-node $\mu$ has at least two children. Also, since $G$ has no multiple edges, $\mu$ has at most one child that is a Q-node. At the same time, since $\Delta(G) \leq 3$, $\mu$ has neither three children nor a child that is an R-node, as otherwise at least one of its poles would have degree greater than $3$ in $G$. Finally, if $\mu$ is the root child of $T$, its poles coincides with the end-vertices of the reference edge $e$; if a child of $\mu$ is a Q-node, $G$ has multiple edges, a contradiction.
		\item{Proof of \textsf{T2}}. Let $\mu$ be a child of an R-node; if $\mu$ is a P-node or an R-node, the poles of $\mu$ have degree greater than one in $G_\mu$, which implies that these vertices have degree greater than three in $G$, a contradiction.
		\item{Proof of \textsf{T3}}. Let $\mu$ be an inner S-node of $T$ and let $w$ be a pole of $\mu$. The parent of $\mu$ in $T$ is either a P-node or an R-node. If it is a P-node, then the edge incident to $w$ in $\skel(\mu)$ cannot be a virtual edge, as otherwise $w$ has degree at least two in $G_\mu$ and it has at least two edges outside $G_\mu$; this would contradict the fact that $\Delta(G) \leq 3$. If the parent node of $\mu$ is an R-node, then $w$ has exactly three incident edges in the skeleton of this R-node, thus it must have degree one in $G_\mu$. Finally, two virtual edges cannot be incident in $\skel(\mu)$, as they would imply a vertex of degree four in the graph.
	\end{itemize}
\end{proof}

\section{Additional Material for Section~\ref{se:properties-ortho}}\label{se:app-RN}
\shapes*
\begin{proof}
	We prove in three steps the existence of a bend-minimum orthogonal representation $H$ that satisfies~\textsf{O1-O3}. We start by a bend-minimum orthogonal representation of $G$ with $e$ on the external face, and in the first step we prove that it either satisfies \textsf{O1} or it can be locally modified, without changing its planar embedding, so to satisfy \textsf{O1}. In the second step, we prove that from the orthogonal representation obtained in the first step we can derive a new orthogonal representation (still with same embedding) that satisfies \textsf{O2} in addition to \textsf{O1}. Finally, we prove that this last representation also satisfies \textsf{O3}.
	
	\smallskip\noindent{\bf Step~1: Property~\textsf{O1}}. Suppose that $H$ is a bend-minimum orthogonal representation of $G$ with $e$ on the external face and having an edge $g$ (possibly coincident with $e$) with at least three bends. Let $\rect{H}$ be the rectilinear image of $H$, and let $\rect{G}$ be the plane graph underlying $\rect{H}$. Since $\rect{H}$ has no bend, $\rect{G}$ satisfies Conditions~$(i)-(iii)$ of Theorem~\ref{th:RN03}. Denote by $v_1$, $v_2$, and $v_3$ three bend vertices in $\rect{H}$ that correspond to three bends of $g$ in $H$.
	Assume first that $g$ is an internal edge of $G$ (i.e., $g$ does not belong to the external face). Let $\rect{G'}$ be the plane graph obtained from $\rect{G}$ by smoothing $v_1$. We claim that $\rect{G'}$ still satisfies Conditions~$(i)-(iii)$ of Theorem~\ref{th:RN03}. Indeed, if this is not the case, there must be a bad cycle in $\rect{G'}$ that contains both $v_2$ and $v_3$. This is a contradiction, because no bad cycle can contain two vertices of degree two. It follows that there exists an (embedding-preserving) orthogonal representation $\rect{H'}$ of $\rect{G'}$ without bends, which is the rectilinear image of an orthogonal representation of $G$ with fewer bends than $H$, a contradiction.
	Assume now that $g$ is on the external cycle $C_o(G)$ of $G$. If $C_o(\rect{G})$ contains more than four vertices of degree two, then we can smooth vertex $v_1$ and apply the same argument as above to contradict the optimality of $H$ (note that, such a smoothing does not violate Condition $(i)$ of Theorem~\ref{th:RN03}). Suppose vice versa that $C_o(\rect{G})$ contains exactly four vertices of degree two (three of them being $v_1$, $v_2$, and $v_3$). In this case, just smoothing $v_1$ violates Condition~$(i)$ of Theorem~\ref{th:RN03}. However, we can smooth $v_1$ and subdivide an edge of $C_o(\rect{G}) \cap C_o(G)$ (such an edge exists because $C_o(G)$ has at least three edges and, by hypothesis and a simple counting argument, at least one of its edges has no bend in $H$). The resulting plane graph $\rect{G''}$ still satisfies the three conditions of Theorem~\ref{th:RN03} and admits a representation $\rect{H''}$ without bends; the orthogonal representation of which $\rect{H''}$ is the rectilinear image is a bend-minimum orthogonal representation of $G$ with at most two bends per edge.
	To see that two bends per edge is worst case optimal, just consider the complete graph $K_4$ on four vertices. Every planar embedding of $K_4$ has three edges on $C_o(K_4)$. By Condition~$(i)$ of Theorem~\ref{th:RN03}, a bend-minimum orthogonal representation of $K_4$ has four bends on the external face and thus two of them are on the same edge.

	\smallskip\noindent{\bf Step~2: Property~\textsf{O2}}. Let $H$ be a bend-minimum orthogonal representation of $G$ that satisfies $\textsf{O1}$ and let $\rect{H}$ be its rectilinear image. The plane underlying graph $\rect{G}$ of $\rect{H}$ satisfies the three conditions of Theorem~\ref{th:RN03}. Rhaman, Nishizeki, and Naznin~\cite[Lemma 3]{DBLP:journals/jgaa/RahmanNN03} prove that, in this case, $\rect{G}$ has an embedding-preserving orthogonal representation $\rect{H'}$ without bends in which every $2$-legged cycle $C$ is either \X-shaped or \D-shaped, where the two poles of the shape are the two leg-vertices of $C$. On the other hand, if $G_\mu$ is an inner P-component or an inner R-component, the external cycle $C_o(G_\mu)$ is a $2$-legged cycle of $G$, where the two leg-vertices of $C_o(G_\mu)$ are the poles of $G_\mu$. Indeed, $C_o(G_\mu)$ is a simple cycle and each pole has exactly one incident edge not belonging to $G_\mu$. It follows that, the orthogonal representation $H'$ of $G$ whose rectilinear image is $\rect{H'}$ satisfies \textsf{O2}, as $H'_\mu$ is either \X-shaped or \D-shaped. Also note that the bends of $H'$ are the same as in $H$, because the bend vertices of $\rect{H}$ coincide with those of $\rect{H'}$. Hence, $H'$ still satisfies \textsf{O1} and has the minimum number of bends.
	
	\smallskip\noindent{\bf Step~3: Property~\textsf{O3}}. Suppose now that $H$ is a bend-minimum orthogonal representation of $G$ (with $e$ on the external face) that satisfies both \textsf{O1} and \textsf{O2}. More precisely, assume that $H = H'$ is the orthogonal representation obtained in the previous step, where its rectilinear image $\rect{H}$ is computed by the algorithm of Rhaman et al.~\cite{DBLP:journals/jgaa/RahmanNN03}, which we simply call \RN. 
	This algorithm works as follows (see also Fig.~\ref{fi:RN03}). In the first step, it arbitrarily designates four degree-2 vertices $x, y, w, z$ on the external face. A 2-legged cycle (resp. 3-legged cycle) of the graph is \emph{bad} with respect to these four vertices if it does not contain at least two (resp. one) of them; a bad cycle $C$ is \emph{maximal} if it is not contained in $G(C')$ for any other bad cycle $C'$. The algorithm finds every maximal bad cycle $C$ (the maximal bad cycles are independent of each other) and it collapses $G(C)$ into a supernode $v_C$. Then it computes a rectangular representation $R$ of the resulting coarser plane graph (i.e., a representation with all rectangular faces) where each of $x, y, w, z$ (or a supernode containing it) is an external corner. Such a representation exists because the graph satisfies a characterization of Thomassen~\cite{Th84}. In the next steps, for each supernode $v_C$, \RN recursively applies the same approach to compute an orthogonal representation of $G(C)$; if $C$ is 2-legged (resp. 3-legged), then two (resp. three) designated vertices coincide with the leg-vertices of $C$. The representation of each supernode is then ``plugged'' into $R$.
	
	Suppose now that $H_\mu$ is an inner S-component of $H$ and let $u$ and $v$ be its poles. Let $B_1, \dots B_h$ be the biconnected components of $G_\mu$ that are not single edges. We call each $B_i$ a \emph{subcomponent} of $G_\mu$ (if $G_\mu$ is a sequence of edges, it has no subcomponents). Consider a generic step of \RN, in which it has to draw $G(C)$, for some cycle $C$ (possibly the external cycle of $G$) such that $G_\mu \subseteq G(C)$. We distinguish between three cases.
	
	\smallskip\noindent{\textsf{Case 1 - }} $G_\mu$ is not contained in any maximal bad cycle. If all the edges of $G_\mu$ are internal edges of $G(C)$, the external cycle $C_o(B_i)$ of each $B_i$ is a maximal bad 2-legged cycle (as it contains no designated vertices). In this case each $G(C_0(B_i))$ is collapsed into a supernode, and in the rectangular drawing $R$ of the resulting graph all the degree-2 vertices and supernodes of the series will belong to the same side of a rectangular face. Thus, when all subcomponents of $H_\mu$ are drawn and plugged into $R$, $H_\mu$ gets spirality zero. If $G_\mu$ has some edges on the external face, some $C_o(B_i)$ might not be a maximal bad cycle (in which case it contains at least two designated vertices); in this case the spirality of $H_\mu$ will be equal to the number of designated vertices on its external edges, which is at most four.
	
	\smallskip\noindent{\textsf{Case 2 - }} $G_\mu$ is contained in a maximal bad cycle $C'$ that passes through both $u$ and $v$. In this case, $G(C')$ is collapsed into a supernode $v_{C'}$ before the computation of a rectangular drawing $R$. The two legs of $C'$ that are incident to $u$ and $v$ will form at $v_{C'}$ in $R$ either two flat angles or a right angle: In the former case, $H_\mu$ will have spirality zero, while in the latter case it will have spirality one.
	
	\smallskip\noindent{\textsf{Case 3 - }} $G_\mu$ is contained in a maximal bad cycle $C'$ that does not passes through both $u$ and $v$. In this case, $G(C')$ is still collapsed into a supernode $v_{C'}$ before the computation of a rectangular drawing $R$, and in one of the subsequent steps of the recursive algorithm on $G(C')$ we will fall in Case~1 or in Case~2.
\end{proof}

Figure~\ref{fi:RN03} shows an example of how \RN works. The input plane graph is in Fig.~\ref{fi:RN03-a}, and it is the same graph $G$ of Fig.~\ref{fi:spqr-tree-a}, with the addition of some degree-2 vertices (small squares), needed to satisfy the properties of Theorem~\ref{th:RN03}. 
The external face of $G$ contains exactly four degree-2 vertices, which are assumed to be the four designated vertices in the first step of \RN.
In the figure, the bad cycles with respect to the designated vertices are highlighted in red; the two cycles with thicker boundaries are maximal, and therefore they are collapsed as shown in Fig.~\ref{fi:RN03-b}. Note that, one of the two maximal cycles includes a designated vertex; once this cycle is collapsed, the corresponding supernode becomes the new designated vertex. Figure~\ref{fi:RN03-c} depicts a rectangular representation of the graph in Fig.~\ref{fi:RN03-b}, and it also shows the representations of the subgraphs in the supernodes, computed in the recursive procedure of \RN; these representations are plugged in the rectangular representation, in place of the supernodes, yielding the final representation of Fig.~\ref{fi:RN03-d}.     

\begin{figure}[!h]
	\centering
	\subfigure[]{\label{fi:RN03-a}\includegraphics[width=0.26\columnwidth]{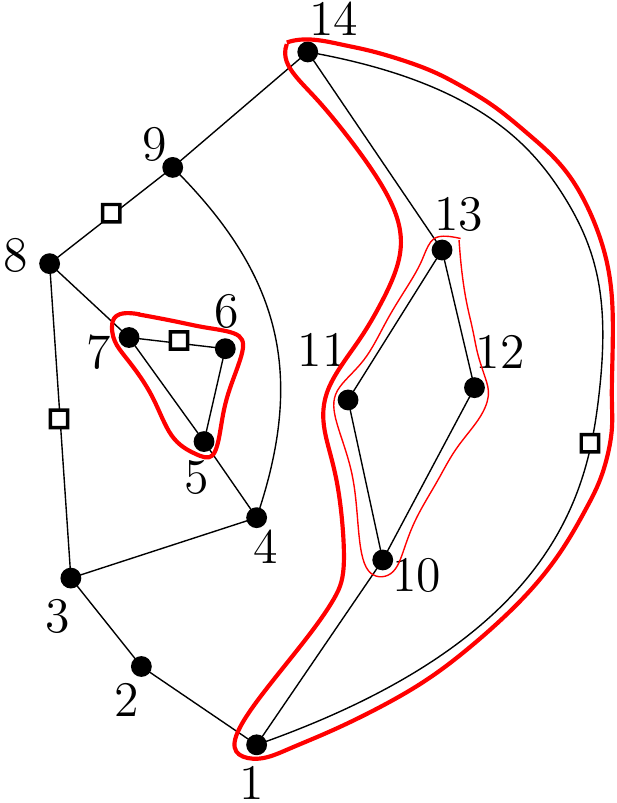}}
	\hfill
	\subfigure[]{\label{fi:RN03-b}\includegraphics[width=0.22\columnwidth]{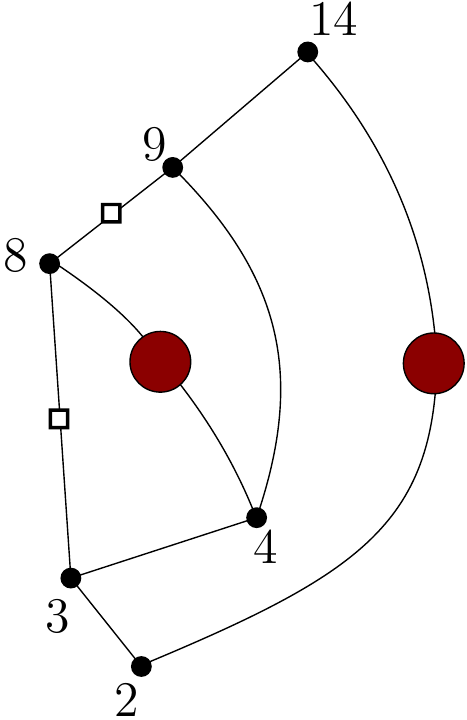}}
	\hfill
	\subfigure[]{\label{fi:RN03-c}\includegraphics[width=0.29\columnwidth]{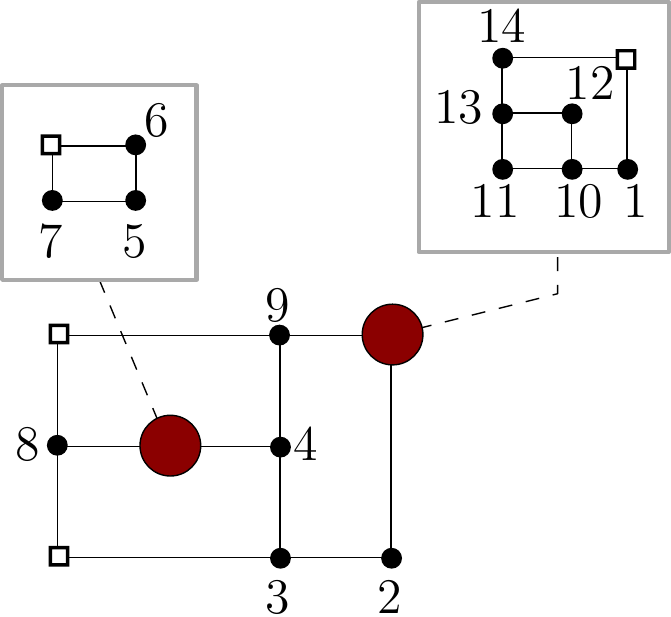}}
	\hfill
	\subfigure[]{\label{fi:RN03-d}\includegraphics[width=0.29\columnwidth]{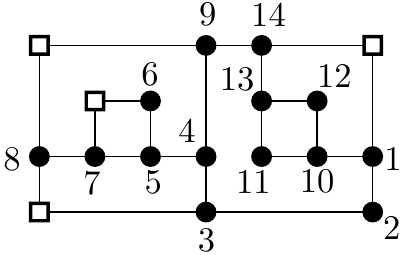}}
	\caption{An illustration of the algorithm \RN, described by Rahaman, Nishizeki, and Naznin~\cite{DBLP:journals/jgaa/RahmanNN03}.}\label{fi:RN03}
\end{figure}

\section{Additional Material for Section~\ref{se:algorithm}}\label{se:app-drawing-alg}
\contourpaths*
\begin{proof}
	The proof is trivial if $\mu$ is a P-node or an R-node since, in order to be equivalent, $H_\mu$ and $H'_\mu$ must be both \D-shaped or \X-shaped, which, by definition, implies that their contour paths have the same turn numbers. If $\mu$ is an S-node, then $H_\mu$ and $H'_\mu$ have the same spirality~$k$ and, by Lemma~\ref{le:k-spiral}, their contour paths have the same turn number $k$ as any path from one pole of $\mu$ to the other pole.
\end{proof}

\substitution*
\begin{proof}
The statement easily follows from Lemma~\ref{le:contour-paths} and from a characterization of orthogonal representations first stated in~\cite{DBLP:journals/siamcomp/VijayanW85}. Let $G$ be a biconnected embedded planar graph and let $\phi$ be a function that assigns: (i) a value in $\{90^\textrm{o},180^\textrm{o},270^\textrm{o}\}$ to each pair of consecutive edges in the circular order around each vertex $v$ of~$G$ and (ii) a sequence of left and right turns to each edge $e=(u,v)$ of~$G$. Denote by $t(f)$ the \emph{turn number} of a face $f$, that is, the difference between the number of right and the number of left turns encountered while clockwise traversing the border of~$f$, where a $180^\textrm{o}$ angle counts as zero, a $90^\textrm{o}$ angle (resp., $270^\textrm{o}$ angle) counts as one if $f$ is an internal face (resp., if $f$ is the external face), and a $270^\textrm{o}$ angle (resp., $90^\textrm{o}$ angle) counts as $-1$ if $f$ is an internal face (resp., if $f$ is the external face). Then $\phi$ corresponds to an orthogonal representation $H_\phi$ of $G$ if and only if the turn number $t(f)$ of each face $f$ is four. 

Hence, an orthogonal representation $H$ of $G$ induces an assignment $\phi$ satisfying the above mentioned property. Let $\phi_\mu$ be the restriction of $\phi$ to the internal faces of $H_\mu$. Let $H'_\mu$ be an orthogonal representation of $G_\mu$ equivalent to $H_\mu$ and let $\phi'_\mu$ be the assignment corresponding to $H'_\mu$. We now prove that replacing $H_\mu$ with $H'_\mu$ yields a new orthogonal representation $H'$ by showing that the corresponding function $\phi'$ satisfies the above characterization. 
In the following, assuming that $u$ and $v$ are the two poles of $H_\mu$ and that $p_l$ and $p_r$ are the two contour paths of $H_\mu$, we call $p'_l$ the contour path of $H'_\mu$ such that $t(p'_l)=t(p_l)$ and $p'_r$ the contour path of $H'_\mu$ such that $t(p'_r)=t(p_r)$ (if a mirroring of $H'_\mu$ is needed in the replacement, the two contour paths of $H'_\mu$ are renamed). Also, let $f_l$ be the face of $H$ to the left of $p_l$ while moving along $p_l$ from $u$ to $v$, and let $f'_l$ be the face of $H'$ to the left of $p'_l$ while moving along $p'_l$ from $u$ to $v$. The faces $f_r$ and $f'_r$ are defined symmetrically.     
Assignment $\phi'$ is such that: (a) Each face $f$ of $H'$ whose boundary is exclusively composed of edges not belonging to $G_\mu$ has the same shape as in $H$ and for the angles and edges of $f$ we have $\phi' = \phi$, hence $t(f)=4$. (b) Each face $f$ whose boundary is exclusively composed of edges of $G_\mu$ has the same shape as in $H'_\mu$ and for the angles and edges of $f$ we have $\phi' = \phi'_\mu$, hence $t(f)=4$. 
(c) The remaining two faces of $H'$ are $f'_l$ and $f'_r$. The boundary of $f_l$ is composed of path $p_l$ plus another path $p$. By construction, the boundary of $f'_l$ is composed of path $p'_l$ plus $p$. Also, again by construction, the two angles at $u$ and $v$ in $f_l$ are the same as the two angles at $u$ and $v$ in $f'_l$. Since $t(p_l)=t(p'_r)$ the turn number of $f'_l$ in $H'$ equals the turn number of $f_l$ in $H$. The same for $f'_r$.
\end{proof}


\snodespirality*
\begin{proof}
	The proof is by induction on the number of children of $\mu$ that are not Q-nodes. Suppose that $\mu$ has only Q-node children ($n_D = 0$). It is trivial that $G_\mu$, which is a path, can be drawn with cost zero and with spirality in $[0,\dots,n_Q-1]$, while has increasing costs for higher values of spirality. For the inductive case, note that inserting an \X-shaped child in between two Q-nodes does not increase nor decrease the spirality of an orthogonal drawing of $G_\mu$. Instead, a \D-shaped child inserted in between two Q-nodes can be used as if it were an additional Q-node to increase the spirality of one unit without additional costs.
\end{proof}

\Snodes*
\begin{proof}
	By virtue of Lemma~\ref{le:s-node-spirality} we can sum up the cheapest costs of the representations of all P- and R-node children of $\mu$ to obtain the cost $c_0$ of a bend-minimum orthogonal representation of $\mu$ with spirality in $[0,\dots,n_Q+n_D-1]$. If $n_Q+n_D-1 \geq 4$ we are done. Otherwise, by Property~\textsf{O2} of Lemma~\ref{le:shapes}, we can optimally increase the spirality of $G_\mu$ by inserting bends into the Q-nodes of $\skel(\mu)$. Since $n_Q \geq 2$ and the needed extra bends are at most three (because $k \leq 4$), if we evenly distribute the extra bends among the real edges of $\skel(\mu)$, we end up with at most two bends per edge, satisfying Property~\textsf{O1} of Lemma~\ref{le:shapes}.
\end{proof}

\Rnodes*
\begin{proof}
We first briefly recall the algorithm in~\cite{DBLP:journals/jgaa/RahmanNN99}, which we call \RNN and which is conceptually similar to \RN (described in the proof of Lemma~\ref{le:shapes}). \RNN takes as input an embedded planar triconnected cubic graph $\hat{G}$ and computes an embedding-preserving bend-minimum orthogonal representation of $\hat{G}$. It initially inserts four dummy vertices $x, y, w, z$ of degree two on $C_o(\hat{G})$, by suitably subdividing some external edges; these vertices act as the four designated vertices of \RN. Note that, since $\hat{G}$ is triconnected, each maximal bad cycle $C$ with respect to  $x, y, w, z$ is a $3$-legged cycle. For each such cycle $C$, the algorithm collapses $\hat{G}(C)$ into a supernode $v_C$, thus obtaining a coarser graph, which admits a rectangular representation $R$~\cite{Th84}.  In the successive steps, for each supernode $v_C$, \RNN recursively applies the same approach to compute an orthogonal representation of $\hat{G(C)}$, where three of the four designated vertices coincide with the leg-vertices of $C$. The representation of~$\hat{G(C)}$ is then plugged into $R$. All designated vertices added throughout the algorithm are bends of the final representation. Crucial to the bend-minimization process of \RNN is the insertion of the designated vertices along edges that are shared by more than one bad cycles (if any). For example, let $C_1$ and $C_2$ be two bad cycles such that $C_1$ is maximal, $C_2$ belongs to $\hat{G}(C_1)$, and $C_1$ and $C_2$ share a path $p$; since both $C_1$ and $C_2$ need a bend, \RNN inserts a designated vertex along $p$ to save bends. At any step of the recursion, a \emph{red} edge is an edge for which placing a designated vertex along it leads to a sub-optimal solution; the remaining edges are \emph{green}. As it is proven in~\cite{DBLP:journals/jgaa/RahmanNN99}, every bad cycle has at least one green edge. A bad cycle $C$ is a {\em corner cycle} if it has at least one green edge on the external face and there is no other bad cycle inside $C$ having this property. In order to save bends when placing the four designated vertices on $C_o(\hat{G})$, \RNN gives preference to the green edges of the corner cycles, if they exist. We remark that, algorithm \RNN produces orthogonal drawings with at most one bend per edge, with the possible exception of one edge in the outer face, which is bent twice if the external boundary is a 3-cycle.	

\smallskip We now describe our algorithm, called \R. Let $\{u,v\}$ be the poles of $\mu$. \R consists of two steps. In the first step it computes an \X-shaped orthogonal representation $\tilde{H}^{\x}_\mu$ and a \D-shaped orthogonal representation $\tilde{H}^{\d}_\mu$ of $\tilde{G}_\mu = \skel(\mu) \setminus (u,v)$, using a variant of \RNN. In the second step, a bend-minimum \X-shaped orthogonal representation $H^{\x}_\mu$ and a bend-minimum \D-shaped orthogonal representation $H^{\d}_\mu$ of $G_\mu$ are constructed, by replacing each virtual edge $e_S$ in each of $\tilde{H}^{\x}_\mu$ and $\tilde{H}^{\d}_\mu$ with the representation in the set of the corresponding S-node whose spirality equals the number of bends of $e_S$. 
Every time the algorithm needs to insert a designated vertex that subdivides an edge of the graph (either to break a bad cycle or to guarantee four external corners in the initial step), it adds this vertex on a virtual edge, if such an edge exists. Indeed, by Corollary~\ref{co:k01}, this vertex does not cause an additional bend in the final representation when the virtual edge is replaced by the corresponding S-component. To find $\tilde{H}^{\x}_\mu$, where $u$ and $v$ are two of the four designated vertices on the external face of $\tilde{G}_\mu$, \R has to find a further designated vertex on each of the two contour paths $p_l$ and $p_r$ of $\tilde{G}_\mu$ from $u$ to $v$. To do this, \R first computes the corner cycles as in \RNN, assuming that $u$ and $v$ are vertices of degree three, that is, attached with an additional leg to the rest of the graph. Since $\skel(\mu) = \tilde{G}_\mu \cup (u,v)$ is cubic and triconnected, the set of corner cycles computed in this way equals the set of corner cycles computed on $\skel(\mu)$, for any of its two possible planar embeddings, minus those that involve $(u,v)$, if any. 
Edges on each of the two contour paths $p \in \{p_l,p_r\}$ are classified as follows: a virtual edge of a corner cycle is \emph{free-\&-useful} (`free' because a bend on such an edge does not correspond to a bend in the final orthogonal representation; `useful' because it also satisfies condition $(iii)$ of Theorem~\ref{th:RN03} for some $3$-legged cycle); a virtual edge that does not belong to a corner cycle is \emph{free-\&-useless}; a (real) edge of a corner cycle is \emph{costly-\&-useful} (`costly' because a bend on such edge is an actual bend in the final orthogonal representation); any other real edge is \emph{costly-\&-useless}. Note that if the edge $e_{l}$ of $p_l$ incident to $u$ (resp. $v$) is useful, also the edge $e_{r}$ of $p_r$ incident to $u$ (resp. $v$) is useful. However, choosing one between $e_l$ and $e_r$ is enough to satisfy condition $(iii)$ of Theorem~\ref{th:RN03} for their common $3$-legged cycles. Hence, choosing $e_l$ will transform $e_r$ into a useless edge and vice versa. 
The two designated vertices on $p_l$ and $p_r$ are chosen in such a way to minimize the sum of the number of useless and the number of the costly edges, which implies the minimization of the number of bends introduced in the final orthogonal drawing of $G_\mu$. Once the four designated vertices are chosen, \R procedes recursively as \RNN. However, each time a new designated vertex has to be added to break a bad cycle $C$, the edge of $C$ along which this vertex is added is chosen according to the following priority: $(i)$ a virtual green edge; $(ii)$ a virtual red edge; $(iii)$ a (real) green edge; $(iv)$ any other real edge. 

The computation of $\tilde{H}^{\d}_\mu$ is similar to that of $\tilde{H}^{\x}_\mu$, with the difference that the two designated vertices on the external face of $\tilde{G}_\mu$, in addition to $u$ and $v$, must be inserted both on $p_l$ or both on $p_r$. 
Further, if a virtual edge $e_{\mu'}$ on the external face of $\tilde{G}_\mu$ corresponds to a child $\mu'$ of $\mu$ that admits an orthogonal drawing with spirality two at the same cost as the orthogonal drawing with spirality zero, then we call such an edge \emph{double-free}, since it can host both the designated vertices without additional costs, and allow it to be chosen two times (only one of the two choices can be useful). Two different computations are performed, one for each contour path, and only the cheapest in terms of bends is considered. 

Concerning the time complexity, \R can be implemented to run in $O(n_\mu)$. Indeed, the first step of \R is a simple modification of \RNN, and thus it works in $O(n_\mu)$ time. The orthogonal representations $H^{\x}_\mu$ and $H^{\d}_\mu$ are then described in a succinct way, by linking the desired representations of the S-component for each virtual edge.  
\end{proof}

\begin{figure}[t]
	\centering
	\subfigure[]{\label{fi:root-child-p-node-a}\includegraphics[width=0.38\columnwidth]{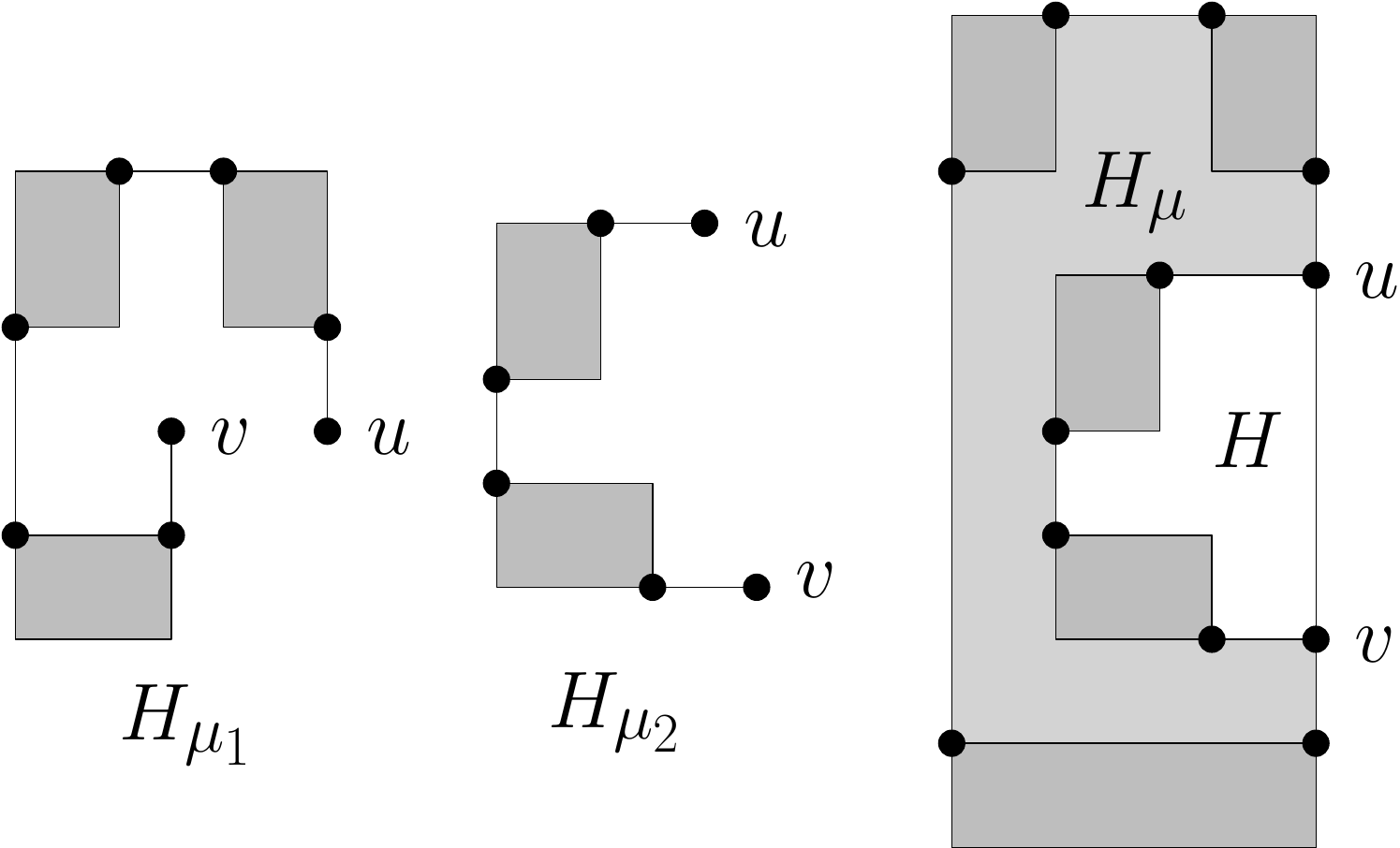}}
	\hfill
	\subfigure[]{\label{fi:root-child-p-node-b}\includegraphics[width=0.38\columnwidth]{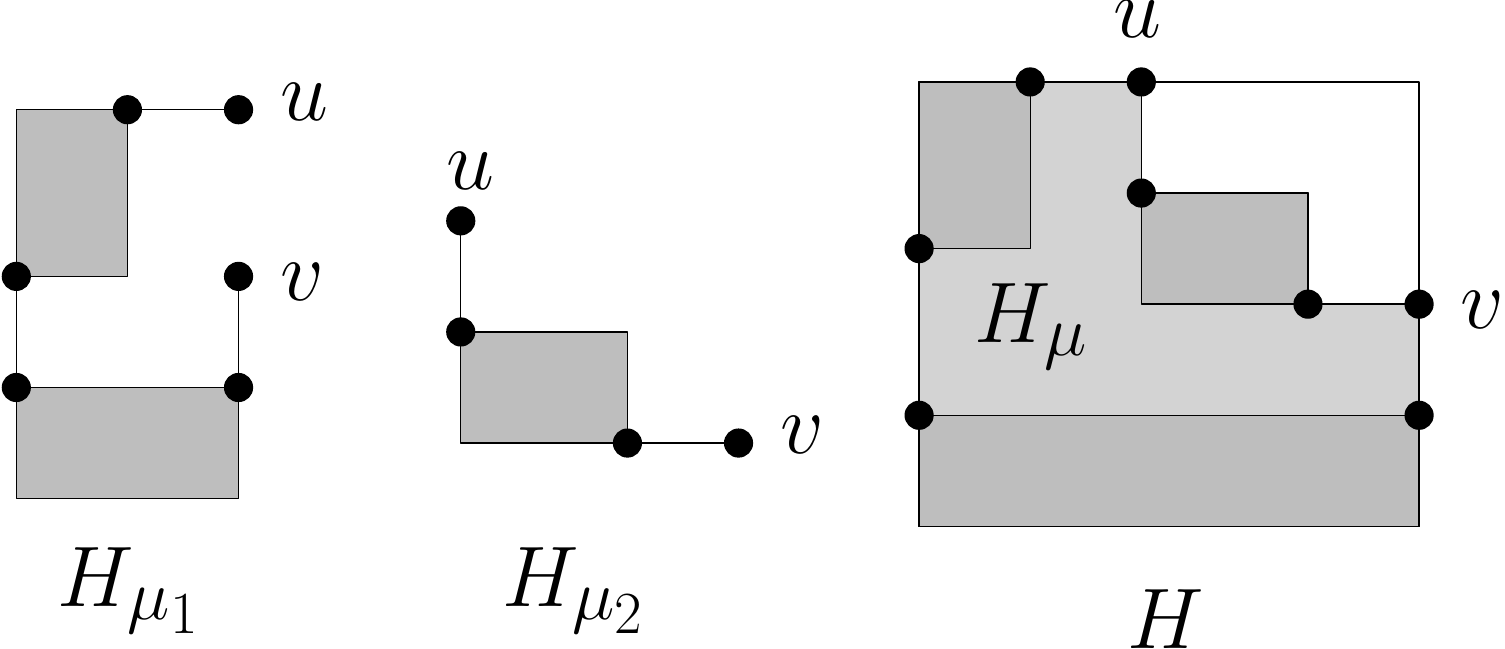}}
	\hfill
	\subfigure[]{\label{fi:root-child-p-node-c}\includegraphics[width=0.38\columnwidth]{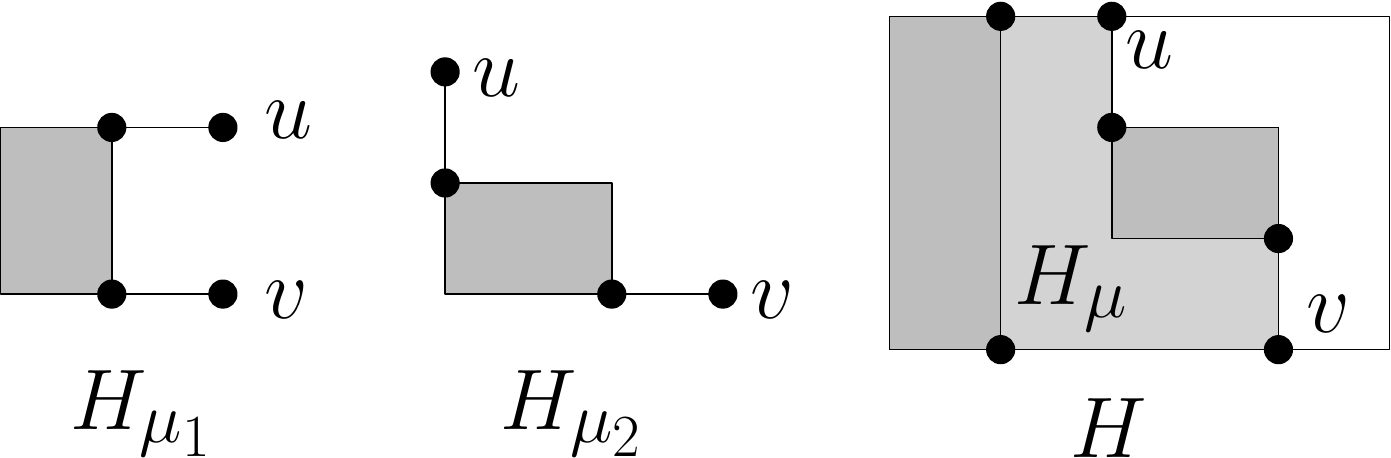}}
	\hfill
	\subfigure[]{\label{fi:root-child-p-node-d}\includegraphics[width=0.33\columnwidth]{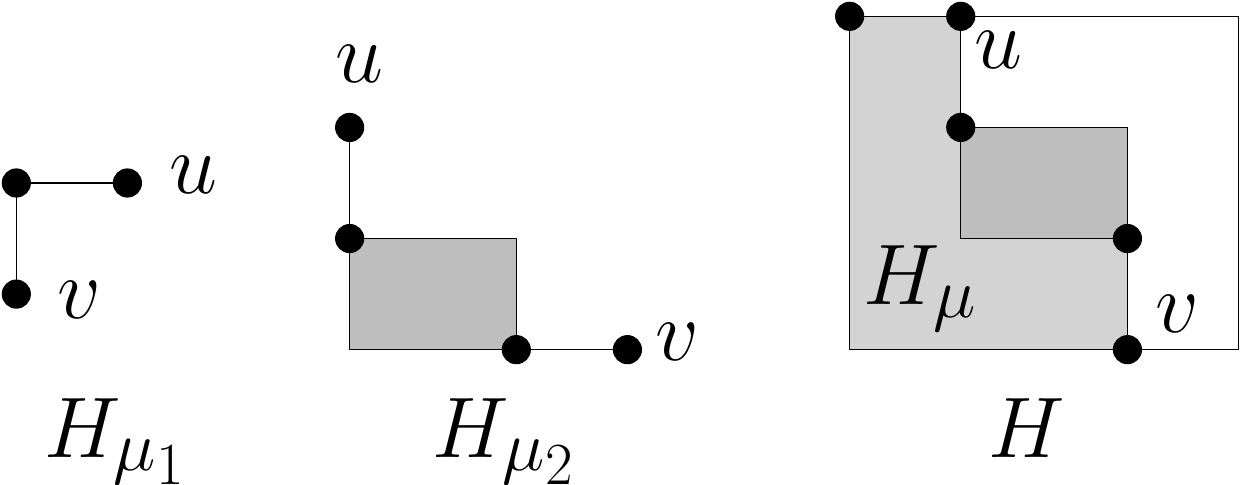}}\\
	\hfill
	\subfigure[]{\label{fi:root-child-s-node-a}\includegraphics[width=0.18\columnwidth]{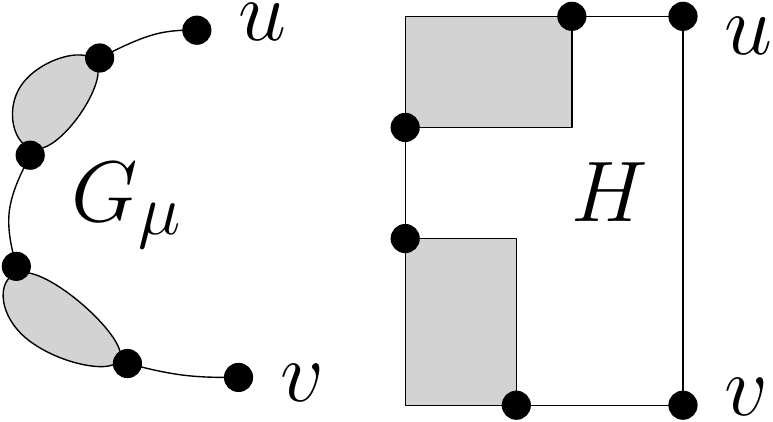}}
	\hfill
	\subfigure[]{\label{fi:root-child-s-node-b}\includegraphics[width=0.36\columnwidth]{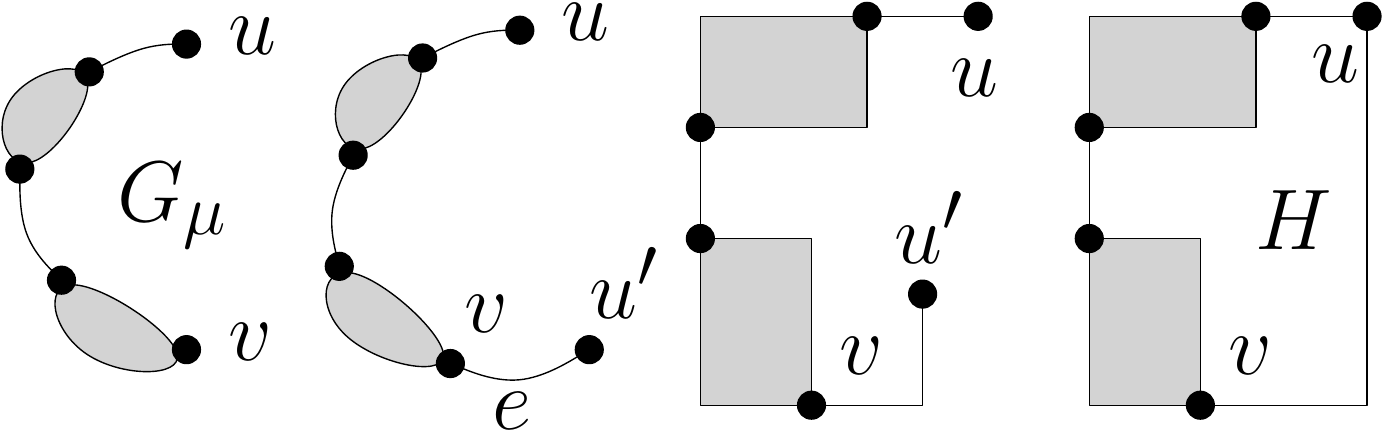}}
	\hfill
	\subfigure[]{\label{fi:root-child-s-node-c}\includegraphics[width=0.36\columnwidth]{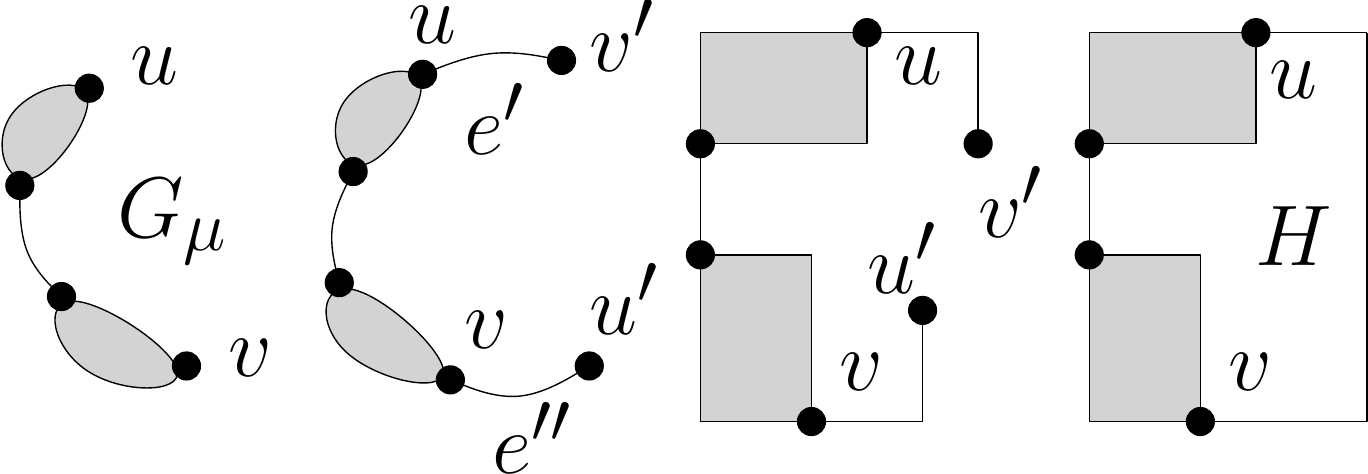}}
	
	\caption{Examples of: (a)-(d) a P-node root child; (e)-(f) an S-node root child.}\label{fi:root-child-p-node}
\end{figure}

\Prootchild*
\begin{proof}
	Observe that any orthogonal representation of $\mu$ satisfies the property that the spirality of $\mu_1$ plus the number of bends of the edge $(u,v)$ is at least four. This property, that we call the \emph{cycle-property}, is due to the fact that $\mu_1$ and $(u,v)$ have to close a cycle containing the drawing of $\mu_2$. 
	Further, since by hypothesis $c_{0,1}$ and $c_{0,2}$ are the minimum number of bends for an orthogonal drawing of $G_{\mu_1}$ and $G_{\mu_2}$, respectively, $G$ does not admit an orthogonal representation with fewer than $c_{0,1}+c_{0,2}$ bends. We consider the costs of the solutions in the four different cases. \textsf{Case 1}: the cost is $c_{0,1} + c_{0,2}$ if $k_2 \geq 2$, which is the minimum possible. Otherwise, by Corollary~\ref{co:k01} we have $k_2 = 1$ and the cost is $c_{0,1} + c_{0,2} + 1$. This is again the minimum, because drawing $\mu_2$ with spirality $1$ would save a bend for $\mu_2$, but would force $(u,v)$ to have one bend. \textsf{Case 2}: the cost is $c_{0,1} + c_{0,2} + 1$, which is the minimum since the cycle-property forces one between $\mu_1$ or $(u,v)$ to host a bend. The same consideration proves that the cost is minimum also for case \textsf{Case 3}, where the costs are $c_{0,1} + c_{0,2} + 2$ and $c_{0,1} + c_{0,2} + 3$, respectively. The obtained orthogonal representation trivially satisfies Properties~\textsf{O1}--\textsf{O3}, since they are satisfied by hypothesys by the candidate representations of $\mu_1$ and $\mu_2$ and $(u,v)$ has at maximum two bends.
\end{proof}

\Srootchild*
\begin{proof}
	The fact that the number of bends is minimum descends from the evidence that it is always convenient to increase the spirality of a series, as it only costs one unit for each unit of increased spirality (Lemma~\ref{le:s-node-spirality}). The obtained orthogonal representation satisfies Properties~\textsf{O1}--\textsf{O3}. In fact, no edge receives more than two bends. In the first case $(u,v)$ is added without bends. In the second case, $G_\mu \cup \{e\}$ is a series with at least two Q-nodes (including $e$), which guarantees that it admits an orthogonal drawing with spirality $1$ without bending the Q-nodes and of spirality three with at most two bends on such Q-nodes. In the third case, $G_\mu \cup \{e',e''\}$ is a series with at least three Q-nodes (including $e'$ and $e''$). Hence, an orthogonal drawing with spirality $4$ has at most two bends on these Q-nodes. 
\end{proof}

\myparagraph{Extension to simply connected graphs}. 
Let $G$ be a simply-connected $3$-planar graph. We exploit the block-cut-vertex tree (or BC-tree) $\cal T$ of the simply-connected 3-planar graph $G$, where each node of $\cal T$ is either a block or a cut-vertex and cut-vertices are adjacent to the blocks they belong to. Call \emph{trivial blocks} those composed of a single edge and \emph{full blocks} the remaining. Since $\Delta(G) \leq 3$, full blocks are only adjacent to trivial blocks. Also, cut-vertices of degree three are adjacent to three trivial blocks. All trivial blocks admit a drawing with zero bends as straight edges.
Let $e$ be an edge that is chosen to be on the external face. Root $\cal T$ at the block $B_e$ containing $e$ and compute in $O(n_e)$ time, where $n_e$ is the number of vertices of $B_e$, a bend-minimum orthogonal representation $H_e$ of $B_e$ with $e$ on the external face as described in Sections~\ref{sse:inner} and~\ref{sse:root-child}. Build an optimal orthogonal representation $H$ of $G$ by recursively attaching pieces to the initial orthogonal representation $H_e$. Let $v$ be a cut-vertex of the current orthogonal representation $H$ and let $B_v$ be a child block attached to $v$. Compute an optimal orthogonal representation $H_v$ of $B_v$ with $v$ on the external face and attach it to $v$. Since $\deg(v) \leq 2$ in $B_v$, in order to have $v$ on the external face it suffices to have one of its two incident edges on the external face. Hence, the required orthogonal representation $H_v$ can be computed in $O(n_v)$ time, where $n_v$ is the number of vertices of~$B_v$.

\end{document}